\numberwithin{equation}{section} \makeatletter
\newtheorem{prop}{Proposition}[section]
\newtheorem{lemma}[prop]{Lemma}
\renewcommand{\tilde}{\widetilde}
\renewcommand{\hat}{\widehat}
\newcommand{\bref}[1]{\textbf{\ref{#1}}}
\newcommand{\dd}{\partial}
\renewcommand{\d}{\partial}
\renewcommand{\geq}{\,{\geqslant}\,}
\renewcommand{\leq}{\,{\leqslant}\,}
\newcommand{\binner}[2]{%
  {\langle}\kern-4.15pt{\langle}#1{,}\,#2{\rangle}\kern-4.15pt{\rangle}}
\newcommand{\half}{\mathchoice{%
    \ffrac{1}{2}}{\frac{1}{2}}{\frac{1}{2}}{\frac{1}{2}}}
\newcommand{\ffrac}[2]{\raisebox{.5pt}%
  {\footnotesize$\displaystyle\frac{#1}{#2}$}\kern1pt}
\newcommand{\dl}[1]{\mathchoice{\ffrac{\dd}{\dd #1}}{\frac{\dd}{\dd
      #1}}{\ffrac{\dd}{\dd #1}}{\ffrac{\dd}{\dd #1}}}
\newcommand{\fR}{\mathbb{R}}
\def\cD{\mathcal{D}}
\def\cE{\mathcal{E}}
\def\cV{\mathcal{V}}
\def\BG-Poincare{Barnich:2009jy}
\def\Fedosov-book{Fedosov:1996fu}
\begin{document}
\begin{center}
{\Large\bfseries 
On the structure of the conformal higher-spin wave operators
\vspace{0.4cm}
} \\

\vskip 0.04\textheight

Maxim Grigoriev${}^{a}$ and Aliaksandr Hancharuk ${}^{a,b}$
\vskip 0.04\textheight

\vspace{5pt}
{\em$^a$ Tamm Department of Theoretical Physics, \\Lebedev Physics
  Institute,\\ Leninsky ave. 53, 119991 Moscow, Russia}\\

\vspace{5pt}
{\em$^b$ Department of General and Applied Physics,\\ 
Moscow Institute of Physics and Technology, \\
Institutskiy per. 7, Dolgoprudnyi, 141700 Moscow region, Russia}\\

\vskip 0.02\textheight

\vspace{2cm}

{\bf Abstract }

\end{center}
\begin{quotation}
We study conformal higher spin (CHS) fields on constant curvature backgrounds. By employing parent formulation technique in combination with tractor description of GJMS operators we find a manifestly factorized form of the CHS wave operators for symmetric fields of arbitrary integer spin $s$ and gauge invariance of arbitrary order $t\leq s$. In the case of the usual Fradkin-Tseytlin fields $t=1$ this gives a systematic derivation of the factorization formulas known in the literature while for $t>1$ the explicit formulas were not known. We also relate the gauge invariance of the CHS fields to the partially-fixed gauge invariance of the factors and show that the factors can be identified with (partially gauge-fixed) wave operators for (partially)-massless or special massive fields. As a byproduct, we establish a detailed relationship with the tractor approach and, in particular, derive the tractor form of the CHS equations and gauge symmetries.
\end{quotation}

\thispagestyle{empty}
\newpage
\cleardoublepage
\setcounter{page}{1}

\tableofcontents
\section{Introduction}

Conformal higher spin gauge theories attract considerable attention because despite being non-unitary they give tractable examples of interacting Lagrangian theories extending conformal gravity and involving higher spin fields. The simplest conformal higher spin (CHS) fields are totally symmetric tensor fields subject to first order gauge transformations. These are also known as Fradkin-Tseytlin fields and were originally proposed in~\cite{Fradkin:1985am} in 4 dimensions and generalized in~\cite{Segal:2002gd} to higher even dimensions. Interacting theory for these fields was proposed much later~\cite{Segal:2002gd,Tseytlin:2002gz} and elaborated further in~\cite{Bekaert:2010ky} (see also~\cite{Bonezzi:2017mwr} for a recent discussion).

CHS fields are intimately related to Fronsdal fields in anti de Sitter space (AdS) of one extra dimension in the context of AdS/CFT correspondence. More specifically, CHS fields in $n$-dimensions can be regarded as leading boundary values of the Fronsdal fields in AdS space of ${n+1}$ dimensions. In so doing the CHS Lagrangian arises as the holographic Weyl anomaly \cite{Metsaev:2008fs,Metsaev:2009ym} (see also~\cite{Bekaert:2012vt,Bekaert:2013zya} for the gauge-covariant analysis at the level of equations of motion). 

Wave operators for CHS fields (CHS operators) are of order $d-4+2s$ and were conjectured~\cite{Tseytlin:2013jya} to factorize into a product of 2nd order operators when written  over the constant curvature background (see also~\cite{Metsaev:2007fq,Metsaev:2007rw,Joung:2012qy} for the relevant earlier contributions). It turns out that in 4 dimensions the factors have the mass terms identical to those of partially-massless fields~\cite{Deser:1983mm,Deser:2001pe} whose order of gauge transformation (known as ``depth'') ranges from $1$ to $s$. In higher dimensions in addition to partially-massless-like wave operators one also finds among the factors the wave operators of certain massive fields~\cite{Metsaev:2014iwa}. The existence of the factorized form allows to express the partition function of CHS fields in terms of the known partition functions of the partially-massless fields~\cite{Tseytlin:2013jya} (see also~\cite{Beccaria:2015vaa,Beccaria:2016tqy}). The manifestly factorized form of the CHS operators was given in~\cite{Nutma:2014pua}.

For a scalar ($s=0$) conformal field the factorization amounts to the familiar factorization of higher-order conformal operators known in the context of conformal geometry as GJMS operators~\cite{Paneitz:1983,Fradkin:1981jc,GJMS}. These also have a natural generalization~\cite{Gover:2005mn} to tractor fields on conformally-Einstein manifolds (for an introduction to tractors see e.g.~\cite{Eastwood,BEG,Cap:2002aj}). The similarity with GJMS operators suggests that tractor technique can be useful in studying factorization of CHS wave operators as well. 

As far as the structure of CHS operators on constant curvature background is concerned in addition to manifestly-factorized form~\cite{Nutma:2014pua} it is also worth mentioning a suggestive ordinary derivative Lagrangian formulations proposed in~\cite{Metsaev:2014iwa}. Furthermore, the manifestly-conformal formulation of CHS equations was proposed in~\cite{Bekaert:2012vt,Bekaert:2013zya} by employing a version of the ambient space technique. In our study of the CHS operators we use this formulation as a starting point.

In this work we are concerned with more general class of the CHS wave operators for totally symmetric fields, which includes those with gauge invariance of arbitrary order $t\leq s$.  These operators are of order $d-2+2(s-t)$ and can be considered as those of the boundary values of depth-$t$ partially-massless fields in $AdS_{n+1}$. We relate these CHS wave operators to GJMS ones by constructing a certain embedding of tensor fields into tractors. More specifically, we realize tractors  using the parent-formulation approach~\cite{Barnich:2006pc,Bekaert:2009fg}, where the usual ambient-space construction is employed to describe the tangent space rather than the spacetime, and demonstrate that CHS fields can be embedded in such a way that GJMS operators coincides with CHS wave operators. As a byproduct we clarify and elaborate in some details on the parent approach description of tractors originally put forward in~\cite{Grigoriev:2011gp}.

Constructing the factorized form requires introducing the so-called scale tractor~\cite{Gover:2005mn} and a new ingredient -- operators $B_k$, $k=1,2,\ldots$. For $k=1$ this was already employed in the literature in the context of tractor description of low spin~\cite{Gover:2008sw} and higher-spin~\cite{Grigoriev:2011gp} fields in constant curvature spaces. In the formulation developped in this paper CHS operator simply takes a manifestly factorized form $B_\ell \ldots B_1$, $\ell=\frac{d-4}{2}+s$. This representation turns out to be useful in analyzing gauge invariance. In particular, we show that $B_t$ is the wave operator of depth-$t$ partially-massless field in traceless gauge.

The paper is organized as follows: in Section~\bref{sec:ambient} we recall the ambient space formulation of tractors and GJMS operators. There we also introduce parent formulation technique which allows to work with fields defined on the conformal space rather than ambient one but still benefit from the manifest realizations of $o(n,2)$-symmetry. In section~\bref{sec:CHS}
we introduce CHS fields, propose a new manifestly $o(n,2)$-invariant formulation for them, and construct the manifestly factorized factorized form. Technical details are relegated to Appendices.

\section{Ambient space and tractors in the parent approach}
\label{sec:ambient}
\subsection{Ambient space}

In this work we are concerned with conformal gauge fields defined on the conformally-flat spaces. The conformal symmetry can be seen as originating from the conformal isometries which, at the infinitesimal level, are given by conformal Killing vector fields. These form $o(n,2)$ algebra, where $n$ is the space-time dimension.

Conformally invariant equations can be described~\cite{Dirac:1936fq} in a manifestly $o(n,2)$-invariant way by employing the ambient space construction (which in turn originates from that of Klein). An ambient space is a pseudo-Eucledean space $\mathbb{R}^{n,2}$ equipped with the metric $\eta_{AB}$ of signature $(-,+,+...,+,-)$. In what follows we use ambient coordinates $(X^+, X^a, X^-)$ so the metric has the form
\begin{equation}
	\eta_{AB}dX^AdX^B = 2dX^+dX^- + \eta_{ab}dX^adX^b
\end{equation}
Where $\eta_{ab}$ is Minkowski metric $(n-1, 1)$. The cone is a zero locus $\lbrace X^2 = 0 \rbrace \backslash \lbrace 0\rbrace$. In this picture, the $n$-dimensional conformal space $M$ is the projectivization of the cone (projective cone in what follows), i.e. the quotient space of $X^2=0$ modulo the equivalence relation $X^A \sim \lambda  X^A, \lambda \in \fR \backslash \lbrace 0 \rbrace $. The quotient is equipped with the conformal structure and with a natural action of $o(n,2)$ as well as the entire conformal group (in what follows we restrict to infinitesimal analysis and hence concentrate on the conformal algebra). The action comes from the standard $o(n,2)$-action on the ambient space.

To pick a representative of the equivalence class of metrics on $M$ one can embed $M$ as a submanifold of $X^2=0$ such that each ray intersects $M$ once and only once. The metric (which is conformally flat by construction) is then obtained by pulling back the ambient metric to $M$. Scalar densities of conformal weight $w$ on $M$ can be described ambiently as:
\begin{subequations}
\label{amb-scalar}
\begin{gather}
		(X\cdot\frac{\partial}{\partial X} - w)\Phi = 0\\
		\Phi \sim \Phi + X^2\chi\,,
	\end{gather}
\end{subequations}
in terms of the ambient space functions $\Phi=\Phi(X)$. Here and in what follows $\cdot$ denotes $o(n,2)$-invariant contraction of indices, e.g. $Z\cdot W=\eta_{AB}Z^A W^B$ and $X^2=X\cdot X$. Because both the constraint and the equivalence relation are manifestly $o(n,2)$-invariant the conformal algebra act on the space defined by~\eqref{amb-scalar} (the same applies to the conformal group).

In a similar fashion we can consider tensor fields on the ambient space satisfying an analog of~\eqref{amb-scalar}. If we restrict ourselves to  totally symmetric fields it is convenient to work in terms of generating functions defined on the cotangent bundle over the ambient space
\begin{equation}
\Phi(X,P)=\sum_{i=0} \Phi^{A_1 \ldots A_i}P_{A_1}\ldots P_{A_i}\,.
\end{equation}
Here $P_A$ are coordinates on the fibers and we assume $\Phi$ to be polynomial in $P$. It follows that the space defined by
\begin{subequations}
\label{ambient_tractors}
\begin{gather}
		(X\cdot \frac{\partial}{\partial X} - w)\Phi(X,P) = 0 \label{at1}\\
		\Phi(X) \sim \Phi(X) + X^2\chi(X,P) \label{at2}
	\end{gather}
\end{subequations}
is that of totally symmetric tractors of weight $w$, which we denote by $\mathcal{E}^{\bullet}[w]$. Indeed, equation \eqref{at2} implies that such tensors are actually defined on the cone $X^2=0$ while \eqref{at1} says that these are actually defined on the projective cone. It is clear that $\mathcal{E}^{\bullet}[w]$ is equipped with a natural action of $o(n,2)$ induced by that on the cotangent bundle over the ambient space.

\subsection{GJMS operators}

Representing tractor fields through~\eqref{ambient_tractors} is useful in studying conformally invariant differential operators defined on tractors. In particular, in these terms it is easy to define so-called GJMS-operators. These were originally proposed~\cite{GJMS} for scalar densities and later extended to generic tractors~\cite{Gover:2002ay,Gover:2005mn} using the Feffermann-Graham construction which reduces to the above ambient space approach in the conformally flat case.

If tractors are described through~\eqref{ambient_tractors} the GJMS-operators are simply powers of the ambient Laplacian
\begin{equation}
	P^{2\ell} \coloneqq \Box_X^{\ell},\quad  P^{2\ell}: \mathcal{E}^{\bullet}[\ell-\frac{n}{2}] \mapsto \mathcal{E}^{\bullet}[-\ell-\frac{n}{2}]\,, \qquad \Box_X\coloneqq \dl{X}\cdot\dl{X}
\end{equation}
This operator is well defined on equivalence classes~\eqref{at2} provided the weights are as above. To see this, it is instructive to exploit that the following 3 operators
\begin{equation}
H \coloneqq X\cdot\dl{X} +\frac{n+2}{2}\,, \qquad E\coloneqq -\half X^2\,, \qquad F \coloneqq  \half \dl{X}\cdot \dl{X}\,.
\end{equation}
define a representation of $sl(2)$-algebra on the ambient space functions. For trivial $\Phi(X)=X^2 \chi$ with $\chi$ of weight $w = \ell-2-\frac{n}{2}$ one finds 
\begin{equation}
	\Box_X^{\ell}(X^2\chi) = X^2\Box_X^{\ell}\chi + 4\ell\Box_X^{\ell-1}(w_{\chi} + \frac{n}{2} - \ell +2)\chi = X^2\Box_X^{\ell}\chi
\end{equation}
because $w_{\chi} = w_{\Phi} - 2 = \ell - 2 - \frac{n}{2}$.

To write explicit formulas for GJMS operators one chooses local coordinates $x^\mu$ on $M$ and particular metric $g_{\mu\nu}$ in the conformal class. The first non-trivial example of GJMS operators is Yamabe operator $\bar\nabla^2 - \frac{(n- 2)}{4(n-1)}R$ defined on scalar densities of weight $1- \frac{n}{2}$, where $\bar\nabla$ is the Levi-Civita connection determined by $g$. 
If one identifies $M$ with flat Minkowski space then GJMS operators are just $\bar\nabla^{2\ell}$ acting on scalar densities of weight $w = \ell - \frac{n}{2}$.

There is a dual description of GJMS operators 
using the following system \cite{GJMS} (see also~\cite{Bekaert:2013zya}):
\begin{equation}
\label{dualGJMS}
	\begin{gathered}
		\Box_X \Phi(X) = 0\,,\\
		(X\cdot\dl{X}-\ell+\frac{n}{2})\Phi(X)=0\,,\\		
		\Phi(X) \sim \Phi(X) + (X^2)^{\ell}\chi\,.
	\end{gathered}
\end{equation}
Here $\chi$ also satisfies analogous constraints but with $-\ell+\frac{n}{2}$ replaced with $\ell+\frac{n}{2}$.
It follows this system is equivalent to~\eqref{ambient_tractors} supplemented by $\Box_X^{\ell} \Phi=0$ and with $w=\ell-\frac{n}{2}$. It is a remarkable property of~\eqref{dualGJMS} that the first two equations considered in the vicinity of the hyperboloid
$X^2=-1$ in the ambient space describe the scalar field of mass $w(w+n)$. Moreover, this gives a systematic way~\cite{Bekaert:2012vt,Bekaert:2013zya} to describe boundary values of the (A)dS field on the hyperboloid. 

\subsection{Thomas-D operator}
An important object well-defined on the equivalence classes \ref{ambient_tractors} is Thomas D-operator $D^A:\mathcal{E}^{B..C}[w] \mapsto \mathcal{E}^{AB..C}[w-1]$. Here we denote by $\mathcal{E}^{B..C}[w]$ tensor fields of arbitrary symmetry and rank of weight $w$ satisfying:
\begin{subequations}
\label{ambient_tractors2}
\begin{gather}
		(X\cdot \frac{\partial}{\partial X} - w)\Phi = 0\\
		\Phi \sim \Phi + X^2\,\chi \
	\end{gather}
\end{subequations}
Of course, \eqref{ambient_tractors2} contains \eqref{ambient_tractors}.
By slightly abusing notations, Thomas D-operator can be defined as follows
\begin{equation}
\label{Thomas}
	D_A \Phi = (2(X \cdot \dl{X} + \frac{n}{2})\frac{\partial}{\partial X^A} - X_A\Box_X)\Phi\,,
\end{equation}
where $\Phi$ is subject to~\ref{ambient_tractors2}. 
It has the following properties:
\begin{itemize}
 \item  $[D_A, D_B] = 0$ 
 \item  $D_A D^A = X^2\Box_X^2$\,.
\end{itemize}
Note that the above definition and properties are to be modified in the conformally non-flat case.

There is a useful relation between GJMS operators and Thomas-D observed in~\cite{Gover:2002ay}:
\begin{equation}
\label{D->P1}
	D_{A_1}...D_{A_{\ell}}\Phi = (-1)^kX_{A_1}...X_{A_{\ell}}P^{2\ell}\Phi\,, \qquad \Phi \in \mathcal{E}^{\bullet}[\ell-\frac{n}{2}]
\end{equation}
A version of this relation is going to be very useful in what follows.

\subsection{Tractors in parent formulation}\label{sec:tractors-parent}

The naive ambient construction sketched above is very useful in describing fields on a conformally-flat background. However, it operates in terms of the equivalence classes of fields on the ambient space rather then fields explicitly defined on the conformal manifold. Moreover, ambient description is not directly applicable to a conformally-flat space which is equivalent to the projective cone only locally. 

Although these issues can be resolved by employing a full-scale Fefferman-Graham construction~\cite{FG} there is a relatively simple and concise alternative. It is based on
reformulating the system~\eqref{ambient_tractors} in the so-called parent form~\cite{Barnich:2006pc,Bekaert:2009fg} in which the ambient construction is realized in the formal version of the ambient space rather than in the space-time. Moreover, this approach has proved useful in describing gauge fields including CHS fields and hence provides a framework to study the structure of the CHS wave operators.

The parent counterpart of the system~\eqref{ambient_tractors} is constructed by first introducing the formal version of the ambient space with coordinates $Y^A$, where one considers totally symmetric tensor fields. As before we work with tensors in terms of the generating function
\begin{equation}
 \Phi=\Phi(Y,P)\,.
\end{equation} 
The dependence on $Y$ is assumed formal, i.e. as functions on the formal ambient space one takes
polynomials in $P$ with coefficients in formal series in $Y$. Given a nonvanishing ambient vector $V_0^A$
one defines a ``twisted'' realization~\cite{Barnich:2006pc,Bekaert:2009fg} of $o(n,2)$ on the space of the above functions in $Y,P$:
\begin{equation}
\label{twisted}
\rho(\alpha)\Phi=\alpha^A_B\left[P_A\dl{P_B}-(Y^B+V_0^B)\dl{Y^A}\right]\,, \qquad \alpha \in o(n,2)\,.
\end{equation}

Then, the conformal structure on $M$ can be encoded in terms of the vector bundle $\cV$ over $M$ whose fiber is a copy of the flat ambient space. More precisely, the bundle is equipped with the fiber-wise pseudo-Euclidean metric $\eta$, nonvanishing section $V\in \Gamma (\cV)$ such that $\eta(V,V)=0$, and an $o(d,2)$ connection $dx^\mu\omega_\mu$ compatible with $\eta$ and such that $\nabla V$ has maximal rank (i.e. seen as a fiber-wise map $TM\to\cV(M)$ it has a vanishing kernel). As we are now interested in flat conformal structures we restrict ourselves to flat connections, i.e.  $d\omega+\omega\omega=0$. 

Given this data one can consider an associated bundle whose fibre is the above space of ``functions'' in $Y,P$, where $o(n,2)$ acts according to the twisted representation~\eqref{twisted}, where at a given point $V_0$ is just $V$ at this point. Moreover we assume that the local frame is chosen in such a way that $V^A$ is constant. For instance, the associated covariant derivative of a section $\Phi$ is given explicitly by:
\begin{equation}
 \pmb{\nabla}_\mu \Phi=\dl{x^\mu}\Phi+\omega^A_{\mu B}\left[P_A\dl{P_B}-(Y^B+V^B)\dl{Y^A}\right]\Phi\,.
\end{equation}
Here and in what follows we use the local identification of sections of this bundle with functions in  $x,P,Y$. In particular, for $Y$-indpendent sections the usual covarinat derivative is reproduced.  In the conformally flat case we are concerned with $\pmb{\nabla}$ is flat, i.e. $\pmb{\nabla}^2=0$. 

The standard choice of the local frame of $\cV(M)$ is such that:
\begin{equation}
\label{bulky_equation}
\begin{gathered}
\omega_{\mu\enspace B}^{\enspace A} = 
\begin{pmatrix}
  0 & -e_{\mu b} & 0 \\
  J_{\mu}^{\enspace a} & \omega_{\mu \enspace b}^{\enspace a} & e_{\mu}^{\enspace a} \\
  0 & -J_{\mu b} & 0 
 \end{pmatrix}
\,,
\qquad
\quad
V^A = \begin{pmatrix}
 V^-\\
 V^a\\
 V^+	
 \end{pmatrix}
 = \begin{pmatrix}
 0\\
 0\\
 1	
 \end{pmatrix}\,,
 \end{gathered}
 \end{equation}
where $e^A=\nabla V^A=\omega^A{}_B V^B$ and $J_{\mu}^{\enspace a} = e^{a\nu}J_{\mu\nu}$ with  $J_{\mu\nu}$ being the Schouten tensor of the metric $g_{\mu\nu}=e^A_\mu e^B_\nu \eta_{AB}$. In dimensions $n\geq 3$ the Schouten tensor is defined in terms of the Ricci tensor and the scalar curvature as $J_{\mu\nu} = \frac{1}{n-2}\left(R_{\mu\nu} - \frac{R}{2(n-1)}g_{\mu\nu} \right)$. We denote $J$ to be the trace of $J_{\mu\nu}$.

Although the construction is frame-independent we assume for simplicity  that the frame is chosen as above.
Note that it's also possible to allow for non-constant $V^A$ at the price of extra terms in the covariant derivative,
see \cite{Barnich:2006pc, Bekaert:2009fg} for more details.

With the above prerequisites we are ready to give a local version of the ambient definition of tractors. More precisely, consider the following system:
\begin{equation}
\label{parentsystem}
\begin{gathered}
\pmb{\nabla}_{\mu}\Phi = 0\,,\\
((Y+V)\cdot\frac{\partial}{\partial Y} - w)\Phi = 0\,,\\
\Phi \sim \Phi + (Y+V)^2 \chi\,,
\end{gathered}	
\end{equation}
where $\chi=\chi(x,P,Y)$ satisfies analogous system with $w$ replaced by $w-2$. The space of equivalence classes of sections determined by~\eqref{parentsystem} is precisely the space of tractors of weight $w$ that we keep denoting $\cE^\bullet[w]$.

The easiest way to see this is to observe that any $Y$-independent section $\Phi_0(x,P)$ admits a unique (up to an equivalence) lift to $\Phi(x,P,Y)$ satisfying~\eqref{parentsystem}. Indeed, taking into account the explicit form 
\eqref{bulky_equation} one finds that the 1-st and the 2-nd equations are first-order in $y^a,Y^+$ and hence solution exists and can be constructed recursively.~\footnote{More precisely, to give a rigorous argument it is useful to introducing Grassmann-odd ghost variables associated to all the constraints in~\eqref{parentsystem} (note that those associated to the components of the covariant derivative are precisely the basis differentials $dx^\mu$) and to employ the homological perturbation theory. In this way it is manifest that consistency conditions are fulfilled at each step.} The arbitrariness in the solution at each step is in adding a function in $Y^-$ but this arbitrariness is taken into account by the equivalence relation in the last line of~\eqref{parentsystem}. 

The GJMS operators can also be defined in terms of~\eqref{parentsystem} as
\begin{equation}
P^{2\ell} \Phi(x,P,Y)= \Box^{\ell} \, \Phi(x,P,Y)\,, \qquad 
\Phi\in \cE^{\bullet}[\ell-\frac{n}{2}]\,,\qquad 
\Box\coloneqq (\dl{Y}\cdot\dl{Y})\,.
\end{equation}
To recover the previous definition of GJMS operators in terms of $Y$-independent fields let $\Phi(x,P,Y)$
be a unique (up to equivalence) solution to~\eqref{parentsystem} with $w=\ell-\frac{n}{2}$ such that $\Phi(x,P,Y)|_{Y=0}=\Phi_0(x,P)$. Then
\begin{equation}
P^{2\ell} \Phi_0(x,P)= \left(\Box^{\ell} \, \Phi(x,P,Y)\right)\Big|_{Y=0}\,.
\end{equation}

The parent analog of the system \eqref{dualGJMS} describing a scalar of weight $\ell-\frac{n}{2}$ subject to GJMS equation of order $2\ell$ reads as
\begin{equation}
\label{dual_GJMS}
	\begin{gathered}
\pmb{\nabla}_{\mu}\Phi = 0\,,\\
((Y+V)\dl{Y} - \ell + \frac{n}{2})\Phi = 0\,,\\
\Box\Phi = 0\\
\Phi \sim \Phi + (Y+V)^{2\ell} \chi\,,
\end{gathered}	
\end{equation}
where $(Y+V)^{2k} \coloneqq ((Y+V)^2)^k$. In this system the GJMS equation arises at order $(Y^-)^{\ell-1}$ of $(\Box \Phi)\big|_{Y^+= Y^a = 0} = 0$.

There exists a partially gauge-fixed version of \eqref{parentsystem}. More precisely, for a given $\tilde\Phi$ satisfying \eqref{parentsystem} one can construct an equivalent representative $\Phi$ of the same equivalence class such that 
\begin{equation}
\label{pgf-tract}
	\Box \Phi= (Y+V)^{2(\ell-1)}\alpha
\end{equation}
for some $\alpha(x,P,Y)$. It is easy to see that the new representative is defined up to a restricted equivalence relation:
\begin{equation}
\label{new-equiv}
 \Phi\sim \Phi+(Y+V)^{2\ell}\chi\,.
\end{equation} 
More formally, \eqref{parentsystem} is equivalent to the partially gauge-fixed system consisting of the first two equations of  
\eqref{parentsystem} supplemented by \eqref{pgf-tract} and the new equivalence relation~\eqref{new-equiv}. In terms of 
\eqref{parentsystem},\eqref{pgf-tract} the GJMS equation can be written as $\alpha=(Y+V)^{2}\beta$ for some $\beta$.
This makes manifest that GJMS equation shows up at order $\ell-1$ in the expansion of $\Box\Phi$ in powers of $(Y+V)^2$.

In terms of tractors described through~\eqref{parentsystem}, Thomas D-operator can be defined as follows:
\begin{equation}
\label{TomasD}
	\mathcal{D}_A \Phi(x,P,Y)\coloneqq
	\left(2((Y+V)\cdot \frac{\partial}{\partial Y} + \frac{n}{2})\frac{\partial}{\partial Y^A} - (Y+V)_A\frac{}{}\Box\right) \Phi(x,P,Y)\,, 
\end{equation}
where $\Phi\in \cE^{\bullet}[w]$. It is easy to check that $P\cdot\mathcal{D}$ is a well-defined map $\cE^{\bullet}[w]\to \cE^{\bullet}[w-1]$. The explicit relation between $\cD_A$ and conventional Thomas-D operator  defined on tractors reads as
\begin{equation}
\label{ThomasD}
 D_A \Phi_0(x,P)=\left(\mathcal{D}_A \Phi(x,P,Y)\right)|_{Y=0}\,, 
\end{equation} 
where as usually $\Phi$ denotes a lift of $\Phi_0$ satisfying~\eqref{parentsystem}. In particular, this gives an alternative systematic way to derive  the explicit  expression for Thomas-D operator. The details of the derivation as well as the explicit expression for $D_A$ are given in Appendix~\bref{sec:components}. Note that a version of this derivation was in~\cite{Grigoriev:2011gp}.

The relation~\eqref{D->P1} between Thomas-D and GJMS operators take the form:
\begin{equation}
	\label{parentGJMS}
	\mathcal{D}_{A_1}... \mathcal{D}_{A_{\ell}}\Phi(x, P, Y) = (-1)^{\ell}(Y+V)_{A_1}...(Y+V)_{A_{\ell}}\Box^\ell\Phi(x, P, Y)\,.
\end{equation}

\subsection{Scale tractor and factorization of GJMS operators}

It is known that the GJMS operator factorises on a conformally-Einstein (in particular, a conformally-flat) background~\cite{Gover:2005mn}. This factorization is easy to arrive at explicitly by making use of additional
important ingredient, the so-called scale tractor. By definition, a scale tractor is a nowhere vanishing weight $0$ and rank $1$ tractor tensor $I^A$ which is parallel,
i.e. satisfying covariant-constancy condition:
\begin{equation}
\label{parallel}
 \nabla_\mu I^A=\partial_{\mu} I^A + \omega_{\mu\enspace B}^{\enspace A}I^{B}=0\,.
\end{equation}

In the conformally-Einstein case~\eqref{parallel} implies that there exists a scalar density $\sigma$ such that $I^A=\frac{1}{n}D^A\sigma$, where $D^A$ is a Thomas-D derivative determined by~\eqref{ThomasD}. In terms of components
\begin{equation}
	I^A = \begin{pmatrix}
		\sigma \\
		\bar\nabla^a \sigma\\
		-\frac{1}{n}(J + \bar\nabla^2)\sigma
	\end{pmatrix}
\end{equation}
where $\bar\nabla_a\coloneqq e_a^\mu \bar\nabla_\mu$. Here $\bar\nabla_\mu$ denotes Levi-Civita covariant derivative determined by the metric $g_{\mu\nu}=\nabla_\mu V^A \nabla_\nu V^B \eta_{AB}$.

If $I^A=\frac{1}{n}D^A\sigma$ and $\nabla_\mu I^A=0$ then $\sigma$ determines a constant curvature representative of the conformal equivalence class of the metric.  More precisely, $g^{c}_{\mu\nu}\coloneqq \sigma^{-2} g_{\mu\nu}$ is constant curvature \cite{LeBrun85}. In the case where $g_{\mu\nu}$ is constant curvature from the very beginning one may simply take $\sigma=1$ so that 
\begin{equation}
	I^A = \begin{pmatrix}
		1\\
		0\\
		-\frac{J}{n}
	\end{pmatrix}\,.
\end{equation}
Note that $V^AI_A =1$ with this choice. Although all the general constructions of this and the next section are valid for general parallel $I^A$ in all the explicit examples we always assume that the metric is constant curvature from the very beginning ($AdS$ for definiteness) and $\sigma=1$.

Because $I$ commutes with $\nabla_\mu$, it also commutes with Thomas D: $[I_A,D_B] = 0$. It is easy to see that $I^AP_A$ satisfies \eqref{parentsystem} with $w=0$ and hence $I_A$ also commutes with $\cD_B$.
It means that the formula for GJMS operator \eqref{parentGJMS}  can be written slightly differently~\cite{Gover:2005mn}:
\begin{equation}
\begin{gathered}
\label{useless}
	P^{2\ell} \Phi(x,P) =  (-1)^{\ell}I^{A_1}{D}_{A_1}...I^{A_{
	\ell}}{D}_{A_{\ell}} \Phi(x,P)\,,\qquad 
\end{gathered}
\end{equation}
Note that $\Phi(x,P)$ is to be understood as a tractor field of weight $w=\ell-\frac{n}{2}$.
Note also that $(I\cdot\mathcal{D})\Phi(x, P)$ has weight $w-1$.

Using~\eqref{useless} one can obtain  explicit formulas for GJMS operators. Indeed, for a tractor field  $\Phi_w(x,P)$ of generic weight $w$ one has 
\begin{equation}
\label{ID}
 (I\cdot \cD)\Phi_w(x,P)=-\lbrace \nabla^2 +\frac{2J}{n}(n+w-1)(w)\rbrace \Phi_w(x,P)\,.
\end{equation} 
Here and in what follows $\nabla^2=g^{\mu\nu}\nabla_\mu\nabla_\nu$ where by slight abuse of notations we denote by $\nabla_\mu$ the covariant derivative $\hat{\nabla}_\mu$ extended to tensors with values in symmetric tractors (identified with polynomials in $P_A$). For instance, for $A_\mu=A_\mu(x,P)$ one has $\nabla_\mu A_\nu=\hat{\nabla}_\mu A_\nu-\Gamma_{\mu\nu}^\rho A_\rho$, where $\Gamma_{\mu\nu}^\rho$ are coefficients of the Levi-Civita connection.
Combining this with~\eqref{useless} one gets:
\begin{equation}
\label{GJMS2}
\prod_{i=0}^{\ell-1} \lbrace \nabla^2 +\frac{2J}{n}(\ell+\frac{n}{2}-i-1)(\ell - \frac{n}{2}-i)\rbrace \Phi(x,P) = P^{2\ell}\Phi(x,P)\,.
\end{equation}
According to \eqref{ID} the order of terms entering \eqref{GJMS2} is the following: the term with $i=0$ acts first, then $i=1$, and so on. 

\section{Conformal higher spin fields}
\label{sec:CHS}
\
There exist conformally invariant equations for totally symmetric tensor fields proposed originally by Fradkin and Tseytlin~\cite{Fradkin:1985am} in 4 dimensions and extended to all even dimensions in~\cite{Segal:2002gd}. The equations are Lagrangian and possess gauge invariance. In Minskowski space the equations and gauge transformations have the following structure:
\begin{equation}
 (\d^2)^{\frac{n-4}{2}+s}\phi_{a_1\ldots a_s}+\ldots=0\,, \qquad \delta\phi_{a_1\ldots a_s}=\d_{(a_1}\epsilon_{a_2\ldots a_s)}+\eta_{(a_1a_s}\omega_{a_3\ldots a_s)}\,,
\end{equation} 
where $\ldots$ denote terms proportional to $\d^{a_1}\phi_{a_1\ldots a_s}$ and $\phi^{a_1}_{a_1\ldots a_s}$. The algebraic gauge symmetry with parameter $\omega$ can be employed to set $\phi^{a_1}_{a_1\ldots a_s}=0$. In what follows we always assume this gauge condition.

CHS fields can be seen as a linearization of the nonlinear CHS theory proposed in~\cite{Tseytlin:2002gz,Segal:2002gd} (see also~\cite{Bekaert:2010ky,Bonezzi:2017mwr}) about a Minkowski space vacuum. The respective action functional arises as an induced action for the scalar field in the higher-spin background.  It is remarkable, that the consistency of the scalar in higher-spin background naturally determines nonlinear gauge transformations for the background higher-spin fields, giving the gauge symmetries of the induced action~\cite{Segal:2002gd}.

In addition to CHS fields whose gauge transformations are of first order in derivatives there are conformal gauge fields whose gauge transformations are of order $t\leq s$ in derivatives, which we refer to as depth-$t$ CHS fields. The law-spin fields of this type were already in~\cite{Drew:1980yk,Barut:1982nj,Deser:1983mm} while higher spin ones were described much later~\cite{Erdmenger:1997wy,Vasiliev:2009ck,Bekaert:2013zya,Beccaria:2015vaa}. In what follows we use the term ``CHS fields'' for the entire family including Fradkin-Tseytlin fields as well as their depth-$t$ generalizations. Note that higher-depth CHS fields in $n$-dimensions are somewhat similar to so-called partially-massless fields~\cite{Deser:1983mm,Deser:2001pe} and in fact can be understood as boundary values of the partially-massless fields on $AdS_{n+1}$~\cite{Bekaert:2013zya}.  It is also worth mentioning that in addition to totally symmetric fields there exist mixed symmetry CHS fields~\cite{Vasiliev:2009ck,Chekmenev:2015kzf} which remain beyond the scope of the present work.

\subsection{Manifestly $o(n, 2)$-invariant description of CHS fields}

Our goal is to study the structure of CHS  equations and, in particular, the factorization of the CHS wave operators. As a starting point of our analysis we use the ambient space formulation of the CHS equations, which is available in the literature. More precisely, CHS equations of motion can be encoded in the following system~\cite{Bekaert:2012vt,Bekaert:2013zya} (see also~\cite{Chekmenev:2015kzf}):
\begin{equation}
\label{FT}
	\begin{gathered}
	  \pmb{\nabla} \Phi = 0 , \qquad ((Y+V)\cdot\frac{\partial}{\partial Y}+\frac{n}{2}-\ell)\Phi = 0, \qquad (Y+V)\cdot\frac{\partial}{\partial P}\Phi = 0\,,\\
	\frac{\partial}{\partial Y}\cdot\frac{\partial}{\partial P}\Phi = 0\,,\qquad \frac{\partial}{\partial P}\cdot\frac{\partial}{\partial P}\Phi = 0,\qquad
	P\cdot\frac{\partial}{\partial P}\Phi = s\Phi\,,\\
		\end{gathered}
\end{equation}
\begin{equation}
\label{Box}
 \Box\Phi = 0\,,
\end{equation}
\begin{equation}
\label{equiv-l}
 \Phi \sim \Phi+(Y+V)^{2\ell}\chi\,,
\end{equation}
which is formulated in the setting of Section~\bref{sec:tractors-parent} and where $\ell=\frac{n}{2}+s-t-1$. The equivalence relation is to be understood as follows: two configurations are equivalent if their difference can be represented as $(Y+V)^{2\ell}\chi$ for some $\chi$. Note that it's not difficult to extract explicitly the conditions $\chi$ ought to satisfy: these also have the form~\eqref{FT},\eqref{Box} but with $\ell$ replaced with $-\ell$.

In terms of the above representation the CHS gauge transformations can be written as follows:
\begin{equation}
\label{gt-PdY}
\begin{gathered}
\delta \Phi = (P\cdot\frac{\partial}{\partial Y})^t \epsilon\,,
\end{gathered}
\end{equation}
where $\epsilon=\epsilon(x,P,Y)$ is subject to the analogous system with $\ell$ replaced by $\ell+t$
and $s$ with $s-t$.
For $t=1$ system~\eqref{FT}-\eqref{gt-PdY} describes the usual CHS fields while for $t=2,\ldots,s$ their higher-depth generalizations. The system is manifestly $o(n, 2)$ invariant.

Strictly speaking system \eqref{FT}-\eqref{gt-PdY} is not equivalent to CHS equations of motion and gauge symmetries. More precisely, in addition to CHS equations
it also encodes conformal gauge conditions which, however, can be consistently removed in one or another way, giving an equivalent formulation of CHS fields (see~\cite{Bekaert:2012vt,Bekaert:2013zya,Chekmenev:2015kzf} for more details).

To see how exactly CHS equations are encoded in the above system let us consider~\eqref{FT} supplemented with 
\begin{equation}
\label{Box-lrel}
\Box \Phi=(Y+V)^{2(\ell-1)}\alpha
\end{equation} 
in place of~\eqref{Box}. By rephrasing the analysis of~\cite{Bekaert:2012vt,Bekaert:2013zya,Chekmenev:2015kzf} in the present terms one finds that any traceless $\phi(x,p)$ can still be lifted to $\Phi(x,P,Y)$ satisfying not only \eqref{FT} but also \eqref{Box-lrel}. Moreover, the condition that
\begin{equation}
\alpha =(Y+V)^2\beta
\end{equation}
for some $\beta$ encodes CHS equation  and conformal gauge conditions.  Here $\alpha$ is understood as a function of $\phi$ and its $x^\mu$-derivatives obtained by solving \eqref{FT},\eqref{Box-lrel}.

It turns out that just CHS equations can be written as 
\begin{equation}
\label{CHS}
 \alpha\big|_{Y=P^\pm=0}=0\,.
\end{equation}
The equation $\alpha\big|_{Y=0}=0$ is conformal by construction.
To check the conformal invariance of \eqref{CHS} one observes that the equation sitting at $P^{\pm}=0$ is of order $2\ell$ while the equations in different components are of higher order and hence their conformal transformations can't compensate the transformations of \eqref{CHS}  so that \eqref{CHS} should be conformally invariant~\cite{Chekmenev:2015kzf}. Their gauge invariance can also be shown on general grounds following~\cite{Bekaert:2009fg,Bekaert:2013zya,Chekmenev:2015kzf}. In any case in this work we give an independent proof of the gauge invariance.

It is also worth mentioning that if one drops the equivalence relation in~\eqref{FT}, takes $V$ such that $V^2=-1$, and takes as $\Phi$ a field on  $AdS_{n+1}$ rather than $n$-dimensional conformal space, the above system   
is precisely the one from~\cite{Alkalaev:2011zv} see also~\cite{Barnich:2006pc,Alkalaev:2009vm}, which describes partially-massless fields on $AdS_{n+1}$. In this form it is manifest that depth-$t$ FT fields in $n$-dimensions are boundary values of the partially-massless fields on $AdS_{n+1}$.

There exist a ``dual'' system that also describes CHS fields but where  
the harmonicity condition is replaced by $(\frac{\partial}{\partial Y}\cdot\frac{\partial}{\partial Y})^{\ell}\Phi(x, P, Y) = 0$:
\begin{equation}
\label{FT1}
	\begin{gathered}
	\pmb{\nabla} \Phi = 0 , \qquad ((Y+V)\cdot\frac{\partial}{\partial Y}+\frac{n}{2}-\ell)\Phi = 0, \qquad (Y+V)\cdot\frac{\partial}{\partial P}\Phi = 0\,,\\
	\frac{\partial}{\partial Y}\cdot\frac{\partial}{\partial P}\Phi = 0\,,\qquad 
	\frac{\partial}{\partial P}\cdot\frac{\partial}{\partial P}\Phi = 0\,,\qquad 
	P\cdot\frac{\partial}{\partial P}\Phi = s\Phi\,,
		\end{gathered}
\end{equation}
\begin{equation}
\label{Box-l}
\Box^{\ell}\Phi = 0 \,,                                                                                    
\end{equation} 
\begin{equation}
\label{equivalence-2}
\Phi \sim \Phi+(Y+V)^2\alpha\,.
\end{equation} 
Note that \eqref{FT1} is identical to \eqref{FT}.

In what follow it is useful to introduce a natural map $L^{-1}$ that sends elements of $\cE^\bullet[w]$ (in particular solutions to~\eqref{FT1}) to tensor fields on $M$. In terms of generating function $\Phi(x,P,Y)$ it is given by
\begin{equation}
L^{-1}\Phi=\Phi\big|_{Y=P^{\pm}=0}\,.
\end{equation} 
We have the following:
\begin{prop}
\label{prop:lift}
For all $\ell>0$ or all non-integer $\ell$ any $\phi(x, p)$ satisfying $\dl{p}\cdot \dl{p}\phi=0$ and $p\cdot\dl{p}\phi=s\phi$ can be lifted to $\Phi(x,P,Y)$ satisfying~\eqref{FT1} and such that $\phi=L^{-1}\Phi$.
The lift is unique if one takes into account the equivalence relation~\eqref{equivalence-2}. 
\end{prop}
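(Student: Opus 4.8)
The plan is to construct the lift by the standard parent-formulation reconstruction and then reduce the statement to the invertibility of a single $sl(2)$-ladder recursion. First I would recover the dependence of $\Phi$ on $Y^a$ and $Y^+$ from its value at $Y^a=Y^+=0$, exactly as in the discussion after~\eqref{parentsystem}. The explicit connection~\eqref{bulky_equation} shows that $\pmb{\nabla}_\mu$ contains the term $-e_\mu{}^a(Y^++V^+)\frac{\partial}{\partial Y^a}$, so covariant constancy $\pmb{\nabla}\Phi=0$ trades $\frac{\partial}{\partial Y^a}$ for the $x$-derivatives and fixes the $Y^a$-coefficients recursively, while the homogeneity equation, being first order in $Y^+$, fixes the $Y^+$-dependence. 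This step is unobstructed (it is the homological-perturbation argument of the footnote), and the only residual freedom at each order is a function of $Y^-$, which is exactly what the equivalence~\eqref{equivalence-2} absorbs. Thus the statement reduces to a fiberwise problem in $P$ and the null variable $Y^-$, with the tensorial constraints $(Y+V)\cdot\frac{\partial}{\partial P}$, $\frac{\partial}{\partial Y}\cdot\frac{\partial}{\partial P}$, $\frac{\partial}{\partial P}\cdot\frac{\partial}{\partial P}$ and $P\cdot\frac{\partial}{\partial P}-s$ acting algebraically on the traceless degree-$s$ datum $\phi(x,p)$.

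I would then organize the remaining equations around the $sl(2)$ triple $E=-\tfrac12(Y+V)^2$, $F=\tfrac12\Box$, $H=(Y+V)\cdot\frac{\partial}{\partial Y}+\tfrac{n+2}{2}$, on which $\Phi$ is an $H$-eigenvector of eigenvalue $\ell+1$. The same computation used after the definition of the GJMS operators shows, via $[F,E]=-H$ and the weight $\ell-\frac n2-2$ of $\alpha$, that $\Box^\ell\big((Y+V)^2\alpha\big)=(Y+V)^2\Box^\ell\alpha$; hence~\eqref{Box-l} descends to the classes~\eqref{equivalence-2} and the dual system~\eqref{FT1} is well posed. The tracelessness $\frac{\partial}{\partial p}\cdot\frac{\partial}{\partial p}\phi=0$ and homogeneity $p\cdot\frac{\partial}{\partial p}\phi=s\phi$ of the data are precisely what makes the lowest-level equation consistent, while the transversality, mixed and trace constraints are preserved along the reconstruction because they commute with $E,F,H$ in the appropriate graded sense. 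The remaining $P^\pm$-components of $\Phi$ together with the harmonicity condition~\eqref{Box-l} are then fixed from $\phi$ by this recursion.

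Existence and uniqueness finally reduce to the invertibility of the operators appearing at each step. Since these are produced by commuting $F$ repeatedly through $E$, their eigenvalues are Pochhammer-type products of factors linear in $\ell$, schematically $\prod_j(\ell-j)$. The main obstacle, and the only place where the hypothesis enters, is to ensure that none of these factors vanishes along the recursion: this is exactly the requirement that $\ell$ be not a non-positive integer, i.e. $\ell>0$ or $\ell$ non-integer. Under this assumption each step is uniquely solvable, which gives existence, and the kernel of the whole reconstruction is exactly $(Y+V)^2(\,\cdot\,)$, which gives uniqueness modulo~\eqref{equivalence-2}. I expect the interlocking of the $P^\pm$-constraints with the $sl(2)$-ladder to be the most delicate bookkeeping, but no obstruction beyond this resonance condition on $\ell$ should arise.
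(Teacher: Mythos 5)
Your opening step---reconstructing the $Y^a$, $Y^+$ dependence from $\pmb{\nabla}\Phi=0$ and the homogeneity constraint, with the residual $Y^-$ arbitrariness absorbed by $\Phi\sim\Phi+(Y+V)^2\alpha$---is exactly how the paper begins in Appendix~\bref{sec:lift-obst}. After that, however, your argument points at the wrong mechanism. The proposition asks for a lift satisfying only the six constraints~\eqref{FT1}; the harmonicity condition~\eqref{Box-l} is the equation of motion and must \emph{not} be imposed on the lift (otherwise $A_{s,t}\phi=L^{-1}(\Box^\ell\Phi)$ in~\eqref{CHS-operator} would vanish identically). Consequently the $sl(2)$ computation $\Box^\ell\bigl((Y+V)^2\alpha\bigr)=(Y+V)^2\Box^\ell\alpha$, while correct, bears on the well-definedness of the CHS operator in the subsequent Proposition, not on the existence of the lift, and the ladder of factors it produces is not where the hypothesis on $\ell$ enters.

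The step you are missing is the determination of the $P^\pm$-dependence of $\Phi_0=\Phi|_{Y=0}$. The constraint $(Y+V)\cdot\dl{P}\Phi=0$ removes the $P^+$-dependence, and $\dl{Y}\cdot\dl{P}\Phi=0$ --- which is \emph{not} algebraic on the fiber data, since at $Y=0$ one has $\dl{Y^a}\Phi=e^\mu_a\nabla_\mu\Phi_0$ --- turns into the recursion~\eqref{P-plus},
\[
\Bigl[\dl{P^{-}}\bigl(n+s+w-1-P^{-}\dl{P^{-}}\bigr)+\dl{p}\cdot\bar\nabla\Bigr]\Phi_0=0\,,\qquad w=\ell-\tfrac{n}{2}\,,
\]
which fixes the $P^-$-components as successive divergences $\Phi_k\propto(\dl{p}\cdot\bar\nabla)^k\phi$ with denominators $n+s+w-k-2$, $k=0,\dots,s-1$. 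These vanish precisely at $w=t+1-s-n$, $t=1,\dots,s$, i.e.\ $\ell=t+1-s-\frac{n}{2}\leq 1-\frac{n}{2}<0$, and this is exactly what ``$\ell>0$ or $\ell$ non-integer'' excludes. Note that your schematic resonance $\prod_j(\ell-j)$ is inconsistent with the stated hypothesis: factors $\ell-j$ with $j>0$ would make \emph{positive} integer $\ell$ the dangerous case, which confirms that the obstruction does not sit in the $Y$-sector $sl(2)$ ladder. Finally, the compatibility of the reconstructed $Y$-dependence with the fourth equation of~\eqref{FT1} is, in the paper, a direct check rather than a consequence of commutation with $E,F,H$. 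As written, your proof establishes neither existence nor uniqueness without this $P^-$ recursion.
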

The statement can be inferred from the analysis of~\cite{Bekaert:2012vt,Bekaert:2013zya,Chekmenev:2015kzf}. Some details of the proof are also given in Appendix~\bref{sec:lift-obst}. 
\begin{prop}
Let $\Phi(x,P,Y)$ be a lift of $\phi(x,p)$ as described in Proposition~\bref{prop:lift} with $\ell=\frac{n}{2}+s-t-1$, then the operator defined by
\begin{equation}
\label{CHS-operator}
A_{s,t}\phi=L^{-1}\left(\Box^{\ell}\Phi\right)
\end{equation}
is well-defined on equivalence classes~\eqref{equivalence-2} and coincides with CHS wave operator.
\end{prop}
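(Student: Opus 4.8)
The plan is to prove the two assertions of the statement in turn: first that $A_{s,t}$ descends to the equivalence classes \eqref{equivalence-2}, and then that it reproduces the CHS wave operator extracted in \eqref{CHS}. Both rest on the same $\mathfrak{sl}(2)$ triple already used for the GJMS operators, now written with $(Y+V)$ in place of $X$:
\begin{equation}
E=-\half(Y+V)^2\,,\qquad F=\half\Box\,,\qquad H=(Y+V)\cdot\frac{\partial}{\partial Y}+\frac{n+2}{2}\,,
\end{equation}
which obey $[H,E]=2E$, $[H,F]=-2F$, $[E,F]=H$; on a field of weight $w$ (i.e. $((Y+V)\cdot\frac{\partial}{\partial Y}-w)\Phi=0$) one has $H\Phi=(w+\frac{n+2}{2})\Phi$. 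The single geometric input beyond this algebra is $V^2=0$, which gives $(Y+V)^2\big|_{Y=0}=0$, so $L^{-1}$ annihilates anything carrying an explicit left factor $(Y+V)^2$, that is $L^{-1}(E\,\psi)=0$ for all $\psi$.

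For well-definedness I would use that, by \eqref{FT1}, the representative $\Phi$ has weight $\ell-\frac n2$, so in \eqref{equivalence-2} the parameter $\alpha$ has weight $w_\alpha=\ell-\frac n2-2$. The parent transcription of the computation performed just before \eqref{dualGJMS} then gives
\begin{equation}
\Box^{\ell}\big((Y+V)^2\alpha\big)=(Y+V)^2\Box^{\ell}\alpha+4\ell\,\Box^{\ell-1}\big(w_\alpha+\tfrac n2-\ell+2\big)\alpha=(Y+V)^2\Box^{\ell}\alpha\,,
\end{equation}
the second term dropping out because $w_\alpha+\frac n2-\ell+2=0$ at this weight. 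Applying $L^{-1}$ and using $V^2=0$ yields $L^{-1}\big(\Box^{\ell}((Y+V)^2\alpha)\big)=0$, so $A_{s,t}\phi$ is insensitive to the chosen representative within \eqref{equivalence-2}.

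Having this invariance, for the identification I would evaluate $A_{s,t}$ on the representative satisfying the partial gauge \eqref{Box-lrel}, namely $\Box\Phi=(Y+V)^{2(\ell-1)}\alpha$; such a representative exists by the lifting argument recalled after \eqref{Box-lrel}, and it lies in the same class as the lift of Proposition~\bref{prop:lift} by the uniqueness stated there, so the invariance just proved licenses its use. Then $\Box^{\ell}\Phi=\Box^{\ell-1}\big((Y+V)^{2(\ell-1)}\alpha\big)=(-4)^{\ell-1}F^{\ell-1}E^{\ell-1}\alpha$. Since $\alpha$ has weight $-\ell-\frac n2$, hence $H$-eigenvalue $\lambda=1-\ell$, a weight count reduces $F^{\ell-1}E^{\ell-1}\alpha=\sum_{i\ge0}\mu_i\,E^iF^i\alpha$; every $i\ge1$ term is killed by $L^{-1}$, and the standard ladder recursion evaluates the surviving coefficient as $\mu_0=(-1)^{\ell-1}(\ell-1)!\,\lambda(\lambda+1)\cdots(\lambda+\ell-2)=\big((\ell-1)!\big)^2\neq0$. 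Therefore
\begin{equation}
A_{s,t}\phi=L^{-1}\big(\Box^{\ell}\Phi\big)=(-4)^{\ell-1}\big((\ell-1)!\big)^2\,\alpha\big|_{Y=P^{\pm}=0}\,,
\end{equation}
and by \eqref{CHS} the right-hand side is the CHS wave operator applied to $\phi$, the nonzero overall constant being fixed by normalization.

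The main obstacle is not the algebra, which is mechanical once the triple and the weights are in place, but the two structural inputs that tie the computation to genuine CHS dynamics: the existence of the partially gauge-fixed representative satisfying \eqref{Box-lrel}, and the fact that its restriction $\alpha\big|_{Y=P^\pm=0}$ is precisely the CHS wave operator rather than some descendant or gauge artefact. Both are established in \cite{Bekaert:2012vt,Bekaert:2013zya,Chekmenev:2015kzf} and summarized around \eqref{CHS}; the delicate point to check carefully is that the same off-shell $\phi$ and the same lift underlie both the defining formula \eqref{CHS-operator} and the description \eqref{Box-lrel}, which is exactly what the equivalence-class invariance established in the first step guarantees.
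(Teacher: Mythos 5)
Your proposal is correct and follows essentially the same route as the paper: well-definedness via the weight identity $\Box^{\ell}\big((Y+V)^2\alpha\big)=(Y+V)^2\Box^{\ell}\alpha$, passage to the partially gauge-fixed representative with $\Box\Phi=(Y+V)^{2(\ell-1)}\alpha$, and the observation that $L^{-1}(\Box^{\ell}\Phi)$ is a nonzero multiple of $L^{-1}(\alpha)$, which is the CHS equation by \eqref{CHS}. The only difference is cosmetic: the paper constructs the gauge-fixed representative by an explicit recursion and merely asserts the coefficient $r$ is nonvanishing, whereas you evaluate it as $(-4)^{\ell-1}\big((\ell-1)!\big)^2$ via the $\mathfrak{sl}(2)$ ladder.
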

\begin{proof}
Operator $A_{s,t}$  is clearly well-defined on equivalence classes $\Phi \sim \Phi+(Y+V)^2\alpha$ and hence determines an operator of derivative order $2\ell$ on totally symmetric traceless tensor fields.

To explicitly relate $A_{s,t}$ to CHS operator we first note that for any $\Phi_0$ satisfying~\eqref{FT1} one can find an equivalent element $\Phi=\Phi_0+(Y+V)^2\ldots$ such that 
\begin{equation}
\label{Box-ell-rel}
 \Box \Phi=(Y+V)^{2(\ell-1)}\alpha\,.
\end{equation} 
Indeed, this is achived by taking
\begin{equation}
\begin{gathered}
\Phi(x, P, Y) = \sum_{k=0}^{l-1}(Y+V)^{2k}\Phi_{k}\\
\Phi_{k} = -\frac{1}{4k(\ell - k)}\Box\Phi_{k-1}\,.
\end{gathered}
\end{equation}
Note that the residual equivalence relation is precisely~\eqref{equiv-l}. In this way we have found that the system~\eqref{FT},\eqref{equiv-l},\eqref{Box-lrel} results from~\eqref{FT1},\eqref{equivalence-2} by partially taking into account the equivalence relation~\eqref{equivalence-2} and hence these systems are equivalent.

Finally, applying $\Box^{\ell-1}$ to both sides of~\eqref{Box-ell-rel} and setting to zero $Y,P^+,P^-$
one finds 
\begin{equation}
 L^{-1}(\Box^\ell \Phi)=r\,L^{-1}(\alpha)
\end{equation} 
where $r$ is a non-vanishing coefficient
and hence equation $L^{-1}(\Box^\ell \Phi)=0$
is equivalent to CHS equations~\eqref{CHS}.
\end{proof}

Let us comment on the relation between the above description of CHS fields and tractors. 
In contrast to tractor fields, which can be seen as certain tensor fields on the $n+2$-dimensional ambient space restricted to $n$-dimensional submanifold, CHS fields (at the off-shell level, i.e. before imposing CHS equations of motion)  are tensor fields (more precisely, tensor densities) in $n$-dimensions on which the action of gauge transformations and conformal transformations is defined. Equations~\eqref{FT1},\eqref{equivalence-2} can be seen as a mean to embedd off-shell CHS fields as a subspace of tractor fields in such a way that the GJMS operator produces the CHS equations of motion through~\eqref{CHS-operator}.

\subsection{Modified system and factorization of CHS operators} 

It turns out that it is useful to employ a certain modification of the system~\eqref{FT1},\eqref{equivalence-2}. In particular, the gauge invariance of the CHS equations is conveniently analysed in the modified formulation. Consider the following system:
\begin{equation}
\label{FT2}
	\begin{gathered}
	\pmb{\nabla} \Phi = 0 , \qquad ((Y+V)\cdot\frac{\partial}{\partial Y} -w)\Phi = 0, \qquad  (Y+V)\cdot\frac{\partial}{\partial P}\Phi = 0\\
	 \mathcal{D}\cdot\frac{\partial}{\partial P}\Phi = 0\,, \qquad 
	\frac{\partial}{\partial P}\cdot\frac{\partial}{\partial P}\Phi = 0,\qquad 
	P\cdot\frac{\partial}{\partial P}\Phi = s\Phi\,,
\end{gathered}
\end{equation}
\begin{equation}
\label{equiv-22}
 	\Phi \sim \Phi + (Y+V)^2\chi\,,
\end{equation} 
where $\chi(x, P, Y)$  satisfies \eqref{FT2} with $w$ replaced by $w-2$.
We denote by $S[s, w]$ the space of equivalence classes determined by this system.
The gauge transformations can be now defined as $\delta\Phi = (P\cdot\mathcal{D})^{t}\epsilon$, where $\epsilon$ also satisfies \eqref{FT2},\eqref{equiv-22} with $w,s$ replaced with $w+t,s-t$ 
so that $(P\cdot\mathcal{D})^{t}$ determines a well-defined map $S[s-t, w+t]\to S[s, w], \quad w = s-t-1$ 

It is easy to check that for $w\neq -\frac{n}{2}$ equations \eqref{FT2} are equivalent to \eqref{FT1}. Indeed,
assuming all the other constraints in \eqref{FT2} but $\cD\cdot\dl{P}$ satisfied one finds:
\begin{equation}
\cD\cdot\dl{P}=(2w+n)\dl{P}\cdot\dl{Y}\,.
\end{equation} 
In particular, for $w=\ell-\frac{n}{2}$ if $\Phi(x,P,Y)$ satisfying \eqref{FT2} denotes a lift of $\phi(x,p)$ ($\dl{p}\cdot \dl{p})\phi=0$ then \eqref{CHS-operator} defines CHS wave operator.

For the special value $w=-\frac{n}{2}$ any $\phi(x,p)$ satisfying $\dl{p}\cdot\dl{p}\phi=0$ can be lifted to $\Phi(x,P,Y)$ satisfying \eqref{FT2}. However,
in contrast to \eqref{FT1} the lift is not unique even if one takes into account the equivalence relation~\eqref{equiv-22}. The uniqueness can be restored by introducing  the following additional equivalence relation:
\begin{equation}
\label{extra-equiv}
 \Phi \sim \Phi+ ((Y+V)\cdot P)\beta\,.
\end{equation} 
Assuming that definiton of $S[s,w]$ in the case of $w=-\frac{n}{2}$ also involves~\eqref{extra-equiv} we conclude that for $w \geq -\frac{n}{2}$, the space $S[s,w]$ is one-to-one with that of totally symmetric traceless tensor fields on $M$.

The apparent disadvantage of defining CHS operator through~\eqref{CHS-operator} is that it requires extracting particular components of $\Box^\ell\Phi$. 
This can be cured as follows: pick a particular metric in the conformal class and consider the following operator defined on $S[s, s-k-1]$:
\begin{equation}
	B_k \coloneqq I \cdot\mathcal{D} - \frac{1}{k}(P\cdot\mathcal{D})(I\cdot \dl{P})
\end{equation}
Indeed, it is well defined on the equivalence classes~\eqref{equiv-22} for these values of parameters. If $\Phi(x, P, Y) \in S[s, s-k-1]$, then $(B_k\Phi)(x, P, Y) \in S[s, s-k-2]$, so that $B_k$ determines a well defined map $S[s, s-k-1]\to S[s, s-k-2]$. Note that in contrast to the operators employed above $B_k$ is not $o(n,2)$-invariant because it contains
the scale tractor that breaks $o(n,2)$-symmetry. With our choice of $I^A$ the residual symmetry is just $(A)dS$-isometries. Note that for $k=1$ this operator was employed in~\cite{Grigoriev:2011gp}, while for $k=1$ and $s=1,2$ it was in~\cite{Gover:2008sw}.

One can also define powers of $B_k$ as follows: $B^{\ell}_{k}\coloneqq B_{k+l-1}\circ ... \circ B_{k}$ which act on $\Phi\in S[s,s-k-1]$ according to  $B^{\ell}_{k}: \Phi \mapsto (B_k^{\ell}\Phi) \in S[s,s-k-1-{\ell}]$. We have the following:
\begin{prop}
\label{proposition}
Let $\Phi(x, P, Y)\in S[s, w]$ with $\quad w = \ell - \frac{n}{2}, \ell = \frac{n-2}{2}+s-t$ be a lift of $\phi(x,p)$, i.e. $L^{-1}\Phi=\phi$. Then equation 
$B^{\ell}_{t}\Phi = 0$ is equivalent to $A_{s,t}\phi=0$ and hence is a CHS equation formulated in terms of $S[s, w]$.
\end{prop}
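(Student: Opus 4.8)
The plan is to reduce the claim to the bijectivity of $L^{-1}$ on $S[s,-\frac{n}{2}]$ together with the scalar-type factorization \eqref{parentGJMS} of $\Box^\ell$. First I would record the weight bookkeeping: since $w=\ell-\frac{n}{2}=s-t-1$, one has $\Phi\in S[s,s-t-1]$, which is exactly the domain of $B_t$, and the composite $B^\ell_t$ lowers the weight $\ell$ times, landing in $S[s,s-t-1-\ell]=S[s,-\frac{n}{2}]$. By the uniqueness statement established just above the proposition (for $w\geq -\frac{n}{2}$, and at the endpoint $w=-\frac{n}{2}$ upon including the extra equivalence \eqref{extra-equiv}), the map $L^{-1}$ is a bijection from $S[s,-\frac{n}{2}]$ onto traceless symmetric tensor fields on $M$. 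Hence $B^\ell_t\Phi=0$ in $S[s,-\frac{n}{2}]$ if and only if $L^{-1}(B^\ell_t\Phi)=0$, and the whole proposition reduces to proving the identity $L^{-1}(B^\ell_t\Phi)=c\,A_{s,t}\phi$ for some nonzero constant $c$.

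To produce this identity I would first treat the uncorrected product $(I\cdot\cD)^\ell$. Contracting the Thomas-D/GJMS relation \eqref{parentGJMS} with $\ell$ copies of the parallel scale tractor $I^A$ and using $[I_A,\cD_B]=0$ together with the mutual commutativity of the $\cD_A$ gives
\begin{equation}
(I\cdot\cD)^\ell\Phi=(-1)^\ell\,(I\cdot(Y+V))^\ell\,\Box^\ell\Phi\,.
\end{equation}
Since $V\cdot I=1$ in the adopted frame, at $Y=0$ one has $I\cdot(Y+V)=I\cdot V=1$ with no $P$-dependence, so $L^{-1}(I\cdot(Y+V))=1$. As $L^{-1}$ is multiplicative under the substitution $Y=P^\pm=0$, applying it yields $L^{-1}\big((I\cdot\cD)^\ell\Phi\big)=(-1)^\ell L^{-1}(\Box^\ell\Phi)=(-1)^\ell A_{s,t}\phi$ by the definition \eqref{CHS-operator} of $A_{s,t}$ (recall that for $w=s-t-1\neq-\frac{n}{2}$ the systems \eqref{FT2} and \eqref{FT1} coincide, so $\Phi$ is a genuine lift in the sense of \eqref{CHS-operator}).

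It then remains to show that the correction terms distinguishing $B^\ell_t$ from $(I\cdot\cD)^\ell$ do not contribute to $L^{-1}$, i.e. that $L^{-1}(B^\ell_t\Phi)=L^{-1}\big((I\cdot\cD)^\ell\Phi\big)$. Expanding the composite, the ``all-leading'' term is precisely $(I\cdot\cD)^\ell$, and every remaining term carries at least one factor $(P\cdot\cD)(I\cdot\dl{P})$. Writing $P\cdot\cD$ through \eqref{TomasD} as $2\big((Y+V)\cdot\dl{Y}+\tfrac{n}{2}\big)(P\cdot\dl{Y})-(P\cdot(Y+V))\Box$, the second piece vanishes under $L^{-1}$ because $L^{-1}(P\cdot(Y+V))=0$; the first piece must be commuted rightward onto the constrained argument, where the defining constraints of \eqref{FT2} ($\cD\cdot\dl{P}\Phi=0$, $(Y+V)\cdot\dl{P}\Phi=0$, $\dl{P}\cdot\dl{P}\Phi=0$) and the commutators among $I\cdot\cD$, $P\cdot\cD$ and $I\cdot\dl{P}$ should force each such term to reduce to an expression proportional to $(Y+V)^2$, to $(Y+V)\cdot P$, or to a $P^\pm$ — all of which lie in the kernel of $L^{-1}$ or are equivalence-trivial by \eqref{equiv-22},\eqref{extra-equiv}. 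This commutator bookkeeping, in which the coefficients $-\frac{1}{k}$ are exactly what is needed for the cross-terms to telescope and the residual obstructions to cancel stage by stage, is the technical heart of the argument and the step I expect to be the main obstacle. Once it is carried out one obtains $L^{-1}(B^\ell_t\Phi)=(-1)^\ell A_{s,t}\phi$ with $c=(-1)^\ell\neq0$, and the desired equivalence $B^\ell_t\Phi=0\iff A_{s,t}\phi=0$ follows from the bijectivity of $L^{-1}$ established in the first step.
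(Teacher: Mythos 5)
Your overall route is the same as the paper's: reduce $B^\ell_t$ to $(I\cdot\cD)^\ell$, use \eqref{parentGJMS} together with $I\cdot V=1$ to identify $L^{-1}\bigl((I\cdot\cD)^\ell\Phi\bigr)$ with $(-1)^\ell A_{s,t}\phi$, and then argue that the correction terms carrying $(P\cdot\cD)(I\cdot\dl{P})$ die under $L^{-1}$. The first two steps are fine. The problem is that the third step --- which you yourself flag as ``the technical heart'' and ``the main obstacle'' --- is exactly the content of the proposition, and you do not carry it out; you only assert that commutator bookkeeping ``should'' reduce each cross-term to something $L^{-1}$-trivial. Worse, the mechanism you guess at is not the right one: nothing telescopes, and the coefficients $-\frac{1}{k}$ play no role here (they matter for well-definedness on the equivalence classes and for gauge invariance, not for this identity). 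The paper's Lemma in Appendix~C shows that $L^{-1}C_{\alpha_1\ldots\alpha_\ell}\Phi=A_{s,t}\phi$ for \emph{arbitrary} coefficients $\alpha_i$. The actual reason each correction drops is a weight count: since $I\cdot\cD$ commutes with all factors, one can arrange that every insertion of $(P\cdot\cD)(I\cdot\dl{P})$ acts on an object of weight $w'=1-\frac{n}{2}$; but on such an object the first piece of Thomas-D carries the coefficient $n+2w'-2=0$ (cf.~\eqref{Thomas-D-comp}), so $P\cdot\cD$ reduces to its $-(P\cdot(Y+V))\Box$ part, which is proportional to $P^-$ at $Y=0$ and is annihilated by $L^{-1}$. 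Without this observation your proof is incomplete at its decisive point.

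A second, smaller issue is the endgame. You pass from $L^{-1}(B^\ell_t\Phi)=0$ to $B^\ell_t\Phi=0$ by invoking bijectivity of $L^{-1}$ on $S[s,-\frac{n}{2}]$. That is legitimate only if the equation $B^\ell_t\Phi=0$ is read in the quotient $S[s,-\frac{n}{2}]$ including the extra equivalence \eqref{extra-equiv}; the paper instead spends the second half of Appendix~C checking the stronger, literal statement, namely that the components of $B^\ell_t\Phi\big|_{Y=0}$ at nonzero $P^-$ vanish as a consequence of the $P^\pm=0$ component vanishing (the only potential obstruction being the conformal gauge conditions, which appear only at $n=4$ and order $(P^-)^s$ with vanishing coefficient). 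You should either state explicitly that you are working modulo \eqref{extra-equiv}, or supply this additional check.
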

It is clear that $B_t^{\ell}$ is well defined on $S[s, w]$ in this case. The proof that it indeed determines CHS wave operator is relegated to Appendix~\bref{sec:Blt-proof}. 

\def\n2{\frac{n}{2}}
It follows from the identification of $S[s,s-k-1]$ with totally symmetric tensor densities that $B_k$ defines an
operator on tensor densities. More precisely, if $L_{s,w}$ denote a map that sends $\phi(x,p)$ to $\Phi(x,P,Y)$ satisfying \eqref{FT2} and $L^{-1}$ the map defined by $L^{-1}\Phi=\Phi\big|_{Y^A,P^\pm=0}$ then for $\phi=\phi(x,p)$ of rank $s$ and weight $s-k-1$
\begin{equation}
\bar B_k\phi=L^{-1}B_k L_{s,s-k-1}\phi
\end{equation} 
is a second order differential operator on tensor densities. Representing $A_{s,t}$ as
\begin{multline}
A_{s,t}\phi=(-1)^{\ell}L^{-1}B_k^\ell L_{s,s-t-1}
=\\=
(-1)^{\ell}(L^{-1}B_{t+\ell-1} L_{s,s-t-\ell})( L^{-1}\ldots    L_{s,s-t-2})(L^{-1} B_t L_{s,s-t-1})
\end{multline} 
one finds
\begin{equation}
\label{A-exp}
 A_{s,t}\phi=(-1)^{\ell}\bar B_{t+\ell-1}\ldots \bar B_{t} \phi\,.
\end{equation} 
In other words we have arrived at the manifestly factorized form of the CHS wave operator. Note that although all the above arguments apply to generic conformally-flat background metric $g_{\mu\nu}$ operators $\bar B_k$ in general depend on scale $\sigma$ so that only on constant curvature spaces where one can take $\sigma=1$ this gives a genuine factorization of CHS wave operator into natural second-order operators.

\subsection{Explicit form of the factors}

Now we are ready to give an explicit component expressions for the CHS operators and the operators $\bar B_k$ in terms of tensor densities on $M$. Leaving the detailed computations for the Appendix~\bref{sec:lift-obst} we get 
 \begin{multline}
  	\bar B_k \phi(x,p)=L^{-1}B_k L_{s,s-k-1}\phi(x, p) = \\ =-
 	\lbrace \bar\nabla^2 +\frac{2J}{n}(-s + (n+s-k-2)(s-k-1) ) -
 	\frac{n+2s-4}{k(n+2s -k-3)}(p\cdot\bar\nabla)(\frac{\partial}{\partial p}\cdot\bar\nabla)
 	+\\+
 	\frac{1}{k(n+2s-k-3)}p^2(\frac{\partial}{\partial p}\cdot\bar\nabla)^2 \rbrace \phi(x,p)\,.
\end{multline}

 It follows from the structure of the mass-like term in the above operator that it coincides with the one of partially-massless field of spin $s$ and depth $k-1$. More precisely, $B_k$ explicitly coincides with the partially massless operator provided both are written in the gauge where $(\dl{p}\cdot\bar\nabla)  \phi(x,p)=0$.

Now the expression~\eqref{A-exp} for the CHS wave operator takes the form:
\begin{equation}
\label{factor_example}
\begin{gathered}
	A_{s,t}\phi(x,p)=\prod_{i=1}^{\frac{n-4}{2}+s -t +1}\lbrace \bar\nabla^2 +\frac{2J}{n}(-s + (n+s-t-i-1)(s-t-i) ) -\\- \frac{n+2s-4}{(t+i-1)(n+2s-t-i-2)}(p\cdot\bar\nabla)(\frac{\partial}{\partial p}\cdot\bar\nabla)+\\+\frac{1}{(t+i-1)(n+2s-t-i-2)}p^2(\frac{\partial}{\partial p}\cdot\bar\nabla)^2 \rbrace \phi(x,p)\,,
\end{gathered}	
\end{equation}
where the operator with $i=0$ acts first, then the operator with $i=1$ and etc. In the special case $w=s-2$ we get the result obtained by Nutma and Taronna \cite{Nutma:2014pua}:
\begin{equation}
\label{Nutma_Taronna}
\begin{gathered}
	A_{s,1}\phi(x, p)=\prod_{i=1}^{i = \frac{n-4}{2} +s}\lbrace \bar\nabla^2 +\frac{2J}{n}(-s + (n+s-i-2)(s-i-1) ) -\\- \frac{n+2s-4}{i(n+2s-i-3)}(p\cdot\bar\nabla)(\frac{\partial}{\partial p}\cdot\bar\nabla)+\frac{1}{i(n+2s-i-3)}p^2(\frac{\partial}{\partial p}\cdot\bar\nabla)^2 \rbrace \phi(x, p)\,.
\end{gathered}	
\end{equation}
Note that although the formulas coincide in our derivation we assumed that $\phi$ is traceless.

\subsection{Gauge invariance}

Now we are ready to analyse gauge invariance of the CHS equations using its factorized representation in terms of $B_k$. 
\begin{prop} 
For any $\epsilon \in S[s-t,s-1]$
\begin{equation}
 B^\ell_t (P\cdot \cD)^t\epsilon=0\,.
\end{equation} 
\end{prop}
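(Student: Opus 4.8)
The plan is to reduce the statement to a single intertwining relation between a factor $B_k$ and the gauge operator $P\cdot\cD$, iterate it through the whole product $B^\ell_t$, and then use that the gauge parameter has bounded degree in $P$. I work at the level of representatives and abbreviate $\mathcal{P}=P\cdot\cD$, $\mathcal{I}=I\cdot\cD$ and $\pi=I\cdot\dl{P}$, so that $B_k=\mathcal{I}-\frac{1}{k}\mathcal{P}\pi$. Since $\cD_A$ differentiates only in the $Y$ variables it commutes with $P_A$ and with $\dl{P_A}$, while the scale tractor commutes with $\cD_B$ (shown above, as $I\cdot P\in\cE^\bullet[0]$); together with $[\cD_A,\cD_B]=0$ in the flat case these give the two commutators
\begin{equation}
 [\mathcal{I},\mathcal{P}]=0\,,\qquad [\pi,\mathcal{P}]=\mathcal{I}\,.
\end{equation}

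From these relations a direct computation yields $B_k\mathcal{P}=\frac{k-1}{k}\mathcal{I}\mathcal{P}-\frac{1}{k}\mathcal{P}^2\pi$, which for $k\geq 2$ rearranges into the intertwining relation $B_k\mathcal{P}=\frac{k-1}{k}\mathcal{P}B_{k-1}$, whereas at the endpoint $B_1\mathcal{P}=-\mathcal{P}^2\pi$. Pushing one factor $B_k$ through all $k$ powers of $\mathcal{P}$ in $\mathcal{P}^k$ then telescopes: the coefficients $\frac{k-1}{k}\cdots\frac{1}{2}$ multiply to $\frac{1}{k}$ and the chain closes on the boundary relation, so that
\begin{equation}
 B_k\,\mathcal{P}^k=-\frac{1}{k}\,\mathcal{P}^{k+1}\,\pi\,.
\end{equation}

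Next I would feed this single-factor identity into the full product $B^\ell_t=B_{t+\ell-1}\circ\cdots\circ B_t$ by induction, applying it to each $B_{t+m}\mathcal{P}^{t+m}$ and carrying the accumulated powers of $\pi$ to the right. This should give
\begin{equation}
 B^\ell_t\,(P\cdot\cD)^t=(-1)^\ell\frac{(t-1)!}{(t+\ell-1)!}\,(P\cdot\cD)^{t+\ell}\,(I\cdot\dl{P})^\ell\,.
\end{equation}
The proof then concludes with a degree count: a parameter $\epsilon\in S[s-t,s-1]$ is a polynomial of degree $s-t$ in $P$, while $I\cdot\dl{P}$ lowers the $P$-degree by one, so $(I\cdot\dl{P})^\ell\epsilon=0$ whenever $\ell>s-t$, i.e. $\ell\geq s-t+1$. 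Since $\ell=\frac{n-2}{2}+s-t$ this holds for all $n\geq 4$, whence $B^\ell_t(P\cdot\cD)^t\epsilon=0$.

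The main obstacle is the bookkeeping at the endpoint of the telescoping rather than the regular part of the recursion. At $k=1$ the intertwining relation degenerates: the coefficient $\frac{k-1}{k}$ vanishes but $B_1\mathcal{P}=-\mathcal{P}^2\pi\neq 0$, and it is exactly this boundary term that trades a power of $\mathcal{P}$ for the surviving $\pi=I\cdot\dl{P}$. The eventual vanishing therefore rests entirely on correctly accounting for this endpoint, since it is the source of the accumulating powers of $\pi$ that are killed by the degree count. A secondary point to check is that each identity descends to the equivalence classes $S[\,\cdot\,,\cdot\,]$, i.e. that the representative-level manipulations are compatible with $\Phi\sim\Phi+(Y+V)^2\chi$ and with the well-definedness of every intermediate $B_j$; this is where the flatness properties $[\cD_A,\cD_B]=0$ and $[I_A,\cD_B]=0$ enter.
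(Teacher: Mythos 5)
Your proof is correct and follows essentially the same route as the paper: the key identity $B_k(P\cdot\cD)^k\epsilon=-\tfrac{1}{k}(P\cdot\cD)^{k+1}(I\cdot\dl{P})\epsilon$, iterated to accumulate powers of $I\cdot\dl{P}$ that vanish on a rank-$(s-t)$ parameter after $s-t+1$ steps, with $\ell\geq s-t+1$ guaranteed by $n\geq 4$, is exactly the paper's argument. Your derivation via the commutators $[I\cdot\cD,P\cdot\cD]=0$, $[I\cdot\dl{P},P\cdot\cD]=I\cdot\cD$ and the intertwining relation $B_k(P\cdot\cD)=\tfrac{k-1}{k}(P\cdot\cD)B_{k-1}$ is merely a more systematic packaging of the same computation.
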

As $S[s-t,s-1]$ is one-to-one with traceless tensor densities on $M$, $(P\cdot \cD)^t$ determines a gauge symmetry of the CHS equations.

\begin{proof}
Observe that 
\begin{equation}
\label{gauge_transf}
\begin{gathered}
	B_t(P\cdot \mathcal{D})^t\epsilon(x, P, Y) = (I\cdot\mathcal{D} - \frac{1}{t}(P\cdot\mathcal{D}) (I\cdot\dl{P})(P\cdot\mathcal{D})^t\epsilon(x, P, Y) =\\
	=-\frac{1}{t}(P\cdot\mathcal{D})^{t+1}(I\cdot\dl{P})\epsilon(x, P, Y)
	\end{gathered}\,.
\end{equation}
Applying $B_{t+1}$ we get $(P\cdot\mathcal{D})^{t+2}(I\cdot\frac{\partial}{\partial P})^2\epsilon(x, P, Y)$ and so on. Because $\epsilon$ is of rank $s-t$, this procedure gives zero after $s-t+1$ iterations. CHS operator $B^{\ell}_t,\,\, \ell= \frac{n-2}{2} +s -t$ contains at least $s-t+1$ factors (for $n\geq 4$) and hence $(P\cdot\mathcal{D})^t\epsilon(x, P, Y)$ is in the kernel of $B^\ell_t$.

To make sure that the gauge transformation $\delta \Phi=(P\cdot\cD)^t\epsilon$ indeed coincides with the standard gauge transformation for CHS fields, one can  check that $L^{-1}(P\cdot\cD)^t\Phi$ is traceless by construction and the leading term is proportional to $(p\cdot\bar\nabla)^t$ as it should be for CHS fields of this type.
\end{proof}

The above technique can be also used to study gauge invariance of $B_k$. Suppose that we subject the gauge parameter $\epsilon\in S[s-k,s-1]$ to the extra condition $I\cdot\dl{P}\epsilon = 0$ which encodes that $L^{-1}\epsilon$
satisfies $\dl p \cdot \bar\nabla(L^{-1}\epsilon)=0$. Then $B_t(P\cdot\mathcal{D})^t\epsilon = (P\cdot\mathcal{D})^{t+1}I\cdot\frac{\partial}{\partial P}\epsilon = 0$.  This gives an additional argument that for $\Phi\in S[s,s-k-1]$ equation $B_k\Phi=0$ is a partially gauge-fixed version of the equations of motion of the partially-massless field of spin $s$ and depth $k$. Examples can be found in Appendix \ref{sec:gauge_pm}.
\subsection{CHS equations in terms of tractors}

Although we have described CHS fields and found the factorized form of the CHS equations by employing the parent formalism, it turns out that the resulting formulas can be written in terms of usual tractor fields. To see this let us find the conditions satisfied by
$\Phi_0(x,P)=(L_{s,w}\phi(x,p))\big|_{Y=0}$. These can be easily obtained by setting $Y^A=0$ in  \eqref{FT2}, giving
\begin{equation}
\label{embed_tractors}
	\begin{gathered}
		V\cdot\dl P \Phi_0 = 0 
		\qquad P\cdot\dl{P}\Phi_0=s\Phi_0\,, \qquad \dl{P}\cdot \dl{P}\Phi_0=0\,,
		\\ 
		\dl{P}\cdot{D} \,\,\Phi_0 = 0\,,
	\end{gathered} 
\end{equation}
where $D_A$ is the usual Thomas-D derivative whose definition and explicit expression are given in respectively~\eqref{ThomasD} and \eqref{Thomas-D-comp}. 

It turns out, that for relevant values of $w$ one can avoid constructing $\Phi=L_{s,w}\phi$ and obtain $\Phi_0$ directly by solving \eqref{embed_tractors} with boundary condition $\Phi_0\big|_{P^\pm=0}=\phi(x,p)$. More precisely,  
with our choice of $V$ the first equation implies $\dl{P^{+}} \Phi_0(x, P)=0$ and hence the last one uniquely fixes the $P^-$ dependence.  Indeed, for $w \neq -\frac{n}{2}$ the last equation is equivalent to $\left( \dl{P^-}(w-1+ n+s - P^-\dl {P^-}) + \dl{p} \cdot \bar\nabla) \right) \Phi(x,P) = 0$ and hence always admits a unique solution if $w$ corresponds to a CHS field. 

It follows \eqref{embed_tractors} determines a particular embedding of totally symmetric traceless tensor densities into traceless symmetric tractors.  Identifying off-shell CHS fields with  the weight $w=s-t-1$ tractor fields satisfying~\eqref{embed_tractors} the CHS equations of motion take the following form:
\begin{equation}
	\mathbb{B}_t^{\ell}\Phi(x, P) = 0,\qquad l = \frac{n-2}{2} +s -t\,,
\end{equation}
where $\mathbb{B}_t = I\cdot D - \frac{1}{t}(P\cdot D)(I\cdot {\dl P})$ is just a tractor version of $B_t$, i.e. where $\cD$ is replaced with the conventional Thomas-D operator, and $\mathbb{B}_t^{\ell}\coloneqq \mathbb{B}_{t+\ell -1}\circ\mathbb{B}_{t+\ell -2}...\circ\mathbb{B}_{t}$. The operator $\mathbb{B}_t$ is well-defined on \eqref{embed_tractors} for $w=s-t-1$, i.e $\mathbb{B}_t\Phi_0$ satisfies \eqref{embed_tractors} for $w=s-t-2$. The gauge transformations are given by
\begin{equation}
\delta \Phi_0 = (P\cdot D)^t\epsilon 
\end{equation} 
where $\epsilon$ is a weight-$s-1$ and rank-$s-t$ tractor field satisfying~\eqref{embed_tractors} 
with $s$ replaced by $s-t$.

Let us consider as a simple example Maxwell field in 4 dimensions, i.e. $n=4,s=t=1$. 
Solving~\eqref{embed_tractors} with the initial condition $\Phi_0\big|_{P^\pm=0}=\phi^a p_a$ and analogous equations for the gauge parameter $\epsilon$ (these are satisfied trivially) gives:
\begin{equation}
\Phi_0^A = 
	\begin{pmatrix}
		0\\
		\phi^a\\
		-\frac{1}{2}\bar\nabla_a\phi^a
	\end{pmatrix}, \qquad D^A\epsilon = \begin{pmatrix}
		0\\
		2\bar\nabla^a\epsilon\\
		-\bar\nabla^2\epsilon
	\end{pmatrix}\,.
\end{equation}
Restricting for simplicity to the flat case and computing $\mathbb{B}_1\Phi_0$ gives
\begin{equation}
\label{Bmaxwell}
(\mathbb{B}_1\Phi_0)^A = 
	\begin{pmatrix}
		0\\
		\partial^2 \phi^a  - \partial^a\partial_b\phi^b\\
		0
	\end{pmatrix} 
\end{equation}
so that in accord with our general statements we indeed get just Maxwell equations. 

If instead of $\mathbb{B}_1\Phi_0$ we consider $\nabla^2 \Phi_0$ (which is precisely $(\Box \Phi(x,P,Y))|_{Y=0}$) we arrive at~\cite{Eastwood:1985eh}
\begin{equation}
 \nabla^2 \Phi = 
	\begin{pmatrix}
		0\\
		\partial^2 \phi^a  - \partial^a\partial_b\phi^b\\
		-\frac{1}{2}\partial^2\partial_a\phi^a
	\end{pmatrix} 
	\,.
\end{equation}
The second slot still contains Maxwell equations themselves, while the last one is the conformal gauge \cite{Eastwood:1985eh} also known as Eastwood-Singer gauge. Of course, this is the same gauge as encoded in the system~\eqref{FT},\eqref{Box},\eqref{equiv-l} for $s=1,t=1$ on top of the Maxwell equations. For general CHS fields one gets higher-spin analogs of this gauge.

\section*{Acknowledgements}
We are grateful to K.~Alkalaev, A.~Chekmenev, R.~Metsaev for useful discussions.
M.G. also wishes to thank N.~Boulanger, X.~Bekaert, and especially  A.~Waldron.  
This work was supported by the Russian Science Foundation grant 18-72-10123.

\appendix
\section{Component expressions}
\label{sec:components}
Here we compute an explicit expression for $\bar B_{s-w-1}=L^{-1}B_{s-w-1}L_{s, w}\phi$. To this end we first compute $\mathcal{D}_A\Phi(x, P, Y)\Big|_{Y=0}$ where $\Phi(x,P,Y)$ is a solution to~\eqref{parentsystem} with the initial condition $\Phi|_{Y=0}=\Phi_0(x,P)$. By using the freedom 
described by the equivalence relation in~\eqref{parentsystem}  $\Phi$ can be assumed $Y^-$-independent. From the first equation in \eqref{parentsystem} one may obtain $Y^a$-derivatives:
\begin{equation}
\label{A1}
	\begin{aligned}
		\left(\dl {Y^a}\Phi\right)\Big|_{Y=0} 
		&= e^\mu_a\nabla_\mu\Phi_0\qquad \\ 
		(\Box\Phi)\big|_{Y=0}=\left( \dl {Y^a} \dl {Y_a}\Phi\right)\Big|_{Y=0} 
		&= (g^{\mu\nu}\nabla_\mu\nabla_\nu + wJ)\Phi_0\,.
	\end{aligned}
\end{equation}
where $e^a_\mu e^b_\mu=\delta_\mu^\nu$. Recall that $\nabla_\mu$ is the covariant derivative $\hat{\nabla}_{\mu}$ extended to tensors with values in tractors and components of $\omega^A_{\mu B}$ introduced in~\eqref{bulky_equation} are given by:
\begin{equation}
\label{A3}
	\omega^a_{\mu +} = e^{a}_{\mu}, \qquad \omega^a_{\mu -} =  J^{a}_{\mu}\,, \qquad 
	\omega^-_{\mu b} = -e_{\mu b}, \qquad \omega^{+}_{\mu b} = -J_{\mu a}\,.
\end{equation}

The second equation determines $Y^+$-derivatives:
\begin{equation}
\label{A2}
\begin{gathered}
	\dl {Y^+} \Phi\Big|_{Y=0} = w\Phi_0(x, P)\,.
	\end{gathered}
\end{equation}

Using \eqref{A1}, \eqref{A2}, \eqref{A3} one finds (Here we retained derivatives in $Y^-$. One may observe that they vanish in the  following expression):
\begin{multline}
\label{Thomas-D-comp}
	\left(\cD_{A}\Phi\right)\Big |_{Y=0} 
	= \left[(n+2w-2)\dl {Y^A} - V_A(\nabla^2 + Jw+ (n+2w-2)\dl{Y^-})\right] \Phi\Big |_{Y=0} 
	=\\= \begin{pmatrix}
 (n+2w-2)w\\
 (n+2w-2)\nabla_a\\
 -(\nabla^2 + Jw)	
 \end{pmatrix}\Phi_0\,.
\end{multline}
The last expression is precisely the component form of Thomas-D derivative of $\Phi_0$ so that indeed~\eqref{ThomasD} reproduces Thomas-D derivative.

Let now $\Phi(x,P,Y)=L_{s, w}\phi$ for some $\phi=\phi(x,p)$ satisfying $p\cdot \dl{p}\phi=s\phi$ and $\dl{p}\cdot \dl{p}\phi=0$. Of course, $\Phi$ still satisfies \eqref{parentsystem} as \eqref{parentsystem} is just a part of \eqref{FT2}. Setting $P^\pm=0$ in~\eqref{Thomas-D-comp} 
and using extra constraints present in~\eqref{FT2} one gets
\begin{equation}
\label{A5}
 \begin{gathered}
 \left({\mathcal{D}_A}\Phi\right)\Big|_{Y=P^{\pm}= 0} =  
 \begin{pmatrix}
 (n+2w-2)w\\
 (n+2w-2)(e^{\mu}_{\enspace a}\bar\nabla_{\mu} +p_a \frac{\partial}{\partial P^-})\\
 -(\Box^{\dag}+ wJ - 2\frac{J}{n}s)
 \end{pmatrix}\Phi(x,P,Y=0)\Big |_{P^{\pm} = 0}\,,
 \end{gathered}
 \end{equation}
where $\Box^{\dag} = \bar\nabla^2 + 2(p\cdot\bar\nabla)\frac{\partial}{\partial P^{-}} + p^2 (\frac{\partial}{\partial P^{-}})^2$. Here and below we again abuse notations by identifying expansion coefficients in $p_a$ as tensor field on which $\bar\nabla$ acts as a Levi-Civita covariant derivative. Note that $\dl {P^{+}}\Phi(x,P,Y=0) = 0$

To compute $(P\cdot\mathcal{D})(I\cdot \dl P)\Phi$ note that $I\cdot \dl P: S[s, w] \mapsto S[s, w-1]$ is a well-defined map, and $I\cdot \dl P\Phi|_{Y=0} = \dl {P^-}\Phi(x, P,Y=0)$. Using this and \eqref{A5} one finds
\begin{multline}
	L^{-1}(P\cdot\mathcal{D})(I\cdot \dl P)\Phi= \\=
	(n+2w-2)\left((p\cdot\bar\nabla\dl{P^-} +p^2\frac{\partial^2}{(\partial P^-)^2} )\Phi(x, P,Y=0)\right)\Big|_{P^{\pm} = 0}\,.
\end{multline}
For $w \neq -\half n$ from 1st, 2nd, 3rd, and 4th equations in \eqref{FT2} one finds $P^-$ dependence:
\begin{gather}
\Phi(x, p, P^-) = \sum_{k=0}^{s}\frac{(P^-)^k}{k!}\Phi_k(x, p)\\
\lbrack \frac{\partial}{\partial P^{-}}(n+s+w-P^{-}\frac{\partial}{\partial P^{-}} - 1) + \frac{\partial}{\partial p}\cdot\bar\nabla \rbrack \Phi(x, p, P^{-}) = 0	
\label{P-plus}\\
\Phi_{1} = - \frac{1}{n+s+w-2}(\frac{\partial}{\partial p}\cdot\bar\nabla)\phi(x,p)\\
\Phi_{2} = \frac{1}{(n+s+w-3)(n+s+w-2)}(\frac{\partial}{\partial p}\cdot\bar\nabla)^2\phi(x,p)
\end{gather}
Finally, one finds
 \begin{multline}
 \label{A8}
 	L^{-1}B_{s-w-1}L_{s, w}\phi(x, p) = -\lbrace \bar\nabla^2 +\frac{2J}{n}(-s + (n+w-1)w ) -\\- \frac{n+2s-4}{(s-1-w)(n+s+w-2)}(p\cdot\bar\nabla)(\dl p\cdot\bar\nabla)
 	+\\+
 	\frac{1}{(n+s+w-2)(s-1-w)}p^2(\dl p\cdot\bar\nabla)^2 \rbrace \phi(x, p)\,.
 \end{multline}
Now one can easily obtain \eqref{factor_example} using \eqref{A-exp} and \eqref{A8} for weight $w =s-t-1$.

\section{Lifts and obstructions}
\label{sec:lift-obst}
Here we verify that the system \eqref{FT} 
(and hence  \eqref{FT2} unless  $w\neq -\frac{n}{2}$) is off-shell, i.e. it does not impose any equations on $\phi(x,p)=\Phi(x,P,Y)\big|_{Y=P^\pm=0}$ besides $p\cdot \dl{p}\phi=s\phi$ and $\dl{p}\cdot \dl{p}\phi=0$. To this end for a given $\phi(x,p)$ we construct a particular lift $\Phi(x, P, Y)$ satisfying \eqref{FT}. More precisely, we choose $\Phi(x, p, P^-, Y^-, P^+ = Y^a= Y^+ = 0)$ to be $Y^-$-independent and observe that $\Phi_0=\Phi|_{Y=0}$ satisfies \eqref{P-plus} as a consequence of the 4th equation in~\eqref{FT}. It is easy to check that  unless $w = t+1-s-n,\quad t =1, 2,..., s$
\eqref{P-plus} has a solution. Then by expansion in powers of $Y^a,Y^+,P^+$ the 1st, 2nd, and 3rd equations can be solved order by order, giving $\Phi(x,P,Y)$. It is then a matter of direct check that such $\Phi(x,P,Y)$ satisfies the 4th equation.

 
If $w = t+1-s-n,\quad t =1, 2,..., s$ the system \eqref{FT} is not off-shell.  Equation \eqref{P-plus} clearly implies the following condition on $\phi(x,p)$:
\begin{equation}
\label{new_constraint}
(\frac{\partial}{\partial p}\cdot\bar\nabla)^t{\phi}	(x, p) = 0
\end{equation}
This has a simple meaning: if $\Psi(x,P,Y)$ satisfying \eqref{FT} represents CHS field of rank $s$ and depth $t$ its weight is $w=s-t-1$ and $\Box^\ell\Phi$ has precisely the weight $t+1-s-n$ so that the LHS of CHS equation $(\Box^\ell\Psi)\big|_{Y=P^\pm=0}=0$ satisfies~\eqref{new_constraint}. 
This is known as partial conservation condition originally discussed in~\cite{Dolan:2001ih}. It can either be understood as a condition on the RHS of the CHS equation: $A_{s,t}\psi(x,p)=j(x,p)$ or as an equations satisfied by a subleading boundary value of the depth-$t$ partially-massless field in $AdS_{n+1}$~\cite{Bekaert:2013zya}.

%
%

\section{Proof of Proposition~\ref{proposition}}
\label{sec:Blt-proof}
Let us first prove the following Lemma:\\
\begin{lemma}
Let 
\begin{equation}
\label{consistencyFT}
\begin{gathered}
	 C_{\alpha_1 \alpha_2... \alpha_{\ell}}:=(I\cdot \mathcal{D} - \alpha_1 P\cdot\mathcal{D}I \cdot\dl P)\cdot ... \cdot (I\cdot \mathcal{D} - \alpha_{\ell} P\cdot\mathcal{D}I \cdot\dl P)\,,
	\end{gathered}
\end{equation}
where $\alpha_i\in\fR$ are parameters, be an operator defined on $S[s,\ell - \frac{n}{2}]$ and $\Phi \in S[s, \ell - \frac{n}{2}]$  be a lift of $\phi(x, p)$, i.e. $\Phi=L_{s,\ell - \frac{n}{2}}\phi$. Then
\begin{equation}
L^{-1}C_{\alpha_1 \alpha_2... \alpha_{\ell}}\Phi=A_{s, t}\phi
\end{equation} 
\end{lemma}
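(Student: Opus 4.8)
The plan is to reduce the product $C_{\alpha_1\ldots\alpha_\ell}$ to a single normal‑ordered monomial and to show that every other monomial is annihilated by $L^{-1}$. Set $\mathsf{a}=I\cdot\cD$, $\mathsf{b}=P\cdot\cD$ and $\mathsf{c}=I\cdot\dl{P}$, so that each factor reads $\mathsf{a}-\alpha_i\,\mathsf{b}\mathsf{c}$. Because $\cD_A$ involves neither $P$ nor $\dl{P}$, and $I^A$ is parallel so that $[\cD_A,I^B]=0$, one gets together with $[\cD_A,\cD_B]=0$ the relations $[\mathsf{a},\mathsf{b}]=[\mathsf{a},\mathsf{c}]=0$ and $[\mathsf{b},\mathsf{c}]=-\mathsf{a}$. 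Thus $\mathsf{a}$ is central while $\mathsf{b},\mathsf{c}$ form a Heisenberg pair over it. Expanding the product over the $\mathsf{a}$‑ and $(\mathsf{b}\mathsf{c})$‑choices and normal‑ordering each $(\mathsf{b}\mathsf{c})^{m}$ (the standard Stirling reordering, using $\mathsf{c}\mathsf{b}=\mathsf{b}\mathsf{c}+\mathsf{a}$, which produces only $\mathsf{a}^{m-j}\mathsf{b}^{j}\mathsf{c}^{j}$ with $j\geq 1$) collapses the whole operator to
\begin{equation}
C_{\alpha_1\ldots\alpha_\ell}=\mathsf{a}^{\ell}+\sum_{k=1}^{\ell}c_k(\alpha)\,\mathsf{a}^{\ell-k}\mathsf{b}^{k}\mathsf{c}^{k}\,,
\end{equation}
where the leading coefficient is exactly $1$ (obtained by taking $\mathsf{a}$ from every factor) and every remaining monomial carries at least one $\mathsf{b}=P\cdot\cD$.

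The surviving term I would treat with \eqref{parentGJMS}. Since $\mathsf{a}$ commutes with $\cD$ and $\Phi\in S[s,\ell-\tfrac n2]\subset\cE^{\bullet}[\ell-\tfrac n2]$, one has $\mathsf{a}^{\ell}\Phi=(-1)^{\ell}\big(I\cdot(Y+V)\big)^{\ell}\Box^{\ell}\Phi$, and as $I\cdot(Y+V)\big|_{Y=0}=I\cdot V=1$ this gives $L^{-1}\mathsf{a}^{\ell}\Phi=(-1)^{\ell}L^{-1}(\Box^{\ell}\Phi)=(-1)^{\ell}A_{s,t}\phi$ by \eqref{CHS-operator} (the overall $(-1)^{\ell}$ being precisely the sign already displayed in \eqref{A-exp}).

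For the correction monomials I would again invoke \eqref{parentGJMS}, now applied to $\mathsf{c}^{k}\Phi=(I\cdot\dl{P})^{k}\Phi$. The key bookkeeping point is that $\mathsf{c}^{k}\Phi$ is still covariantly constant ($\mathsf c$ commutes with $\pmb\nabla$ because $I$ is parallel) and of the same weight $\ell-\tfrac n2$ (as $\mathsf c$ commutes with $(Y+V)\cdot\dl{Y}$), hence lies in $\cE^{\bullet}[\ell-\tfrac n2]$. Collecting all the $I$'s and $P$'s through the mutually commuting $\cD$'s yields
\begin{equation}
\mathsf{a}^{\ell-k}\mathsf{b}^{k}\mathsf{c}^{k}\Phi=(-1)^{\ell}\big(I\cdot(Y+V)\big)^{\ell-k}\big(P\cdot(Y+V)\big)^{k}\,\Box^{\ell}\big(\mathsf{c}^{k}\Phi\big)\,.
\end{equation}
Now $P\cdot(Y+V)\big|_{Y=0}=P\cdot V=P^{-}$ is a single lightcone component of $P$, and $L^{-1}$ sets $P^{\pm}=0$; the explicit factor $(P^{-})^{k}$ therefore forces $L^{-1}\big(\mathsf{a}^{\ell-k}\mathsf{b}^{k}\mathsf{c}^{k}\Phi\big)=0$ for every $k\geq 1$. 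Hence $L^{-1}C_{\alpha_1\ldots\alpha_\ell}\Phi$ is independent of all the $\alpha_i$ and equals $L^{-1}\mathsf{a}^{\ell}\Phi=A_{s,t}\phi$, as claimed.

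The \textbf{main obstacle} is twofold: first, making rigorous the normal‑ordering combinatorics so as to be certain that no correction term ever degenerates into a pure power of $\mathsf{a}$ (which would spoil the $\alpha$‑independence), and second, verifying that $\mathsf{c}^{k}\Phi$ genuinely meets the hypotheses — covariant constancy and weight $\ell-\tfrac n2$ — required for \eqref{parentGJMS} to apply. The conceptual heart of the argument, and the reason the parameters $\alpha_i$ drop out, is the observation that every $P\cdot\cD$, once converted by \eqref{parentGJMS}, manufactures the null contraction $P\cdot(Y+V)$ whose value $P\cdot V$ at $Y=0$ is exactly the lightcone component annihilated by $L^{-1}$.
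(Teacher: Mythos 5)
Your proof is correct, and it reaches the conclusion by a genuinely different organization than the paper's. The paper peels the factors off one at a time from the left: using that $I\cdot\cD$ commutes with every factor, each $\alpha_i$-term is brought to the leftmost position, where it acts on an object of weight $1-\frac{n}{2}$; at that critical weight the coefficient $2((Y+V)\cdot\dl{Y}+\frac{n}{2})$ in \eqref{TomasD} annihilates the first term after $\dl{Y^A}$ has acted, so $P\cdot\cD$ degenerates to $-(P\cdot(Y+V))\Box$ and the explicit $P^-$ at $Y=0$ kills the term under $L^{-1}$; iterating $\ell$ times telescopes $L^{-1}C_{\alpha_1\ldots\alpha_\ell}\Phi$ down to $L^{-1}(I\cdot\cD)^\ell\Phi$. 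Your route instead normal-orders the Heisenberg pair $(\mathsf{b},\mathsf{c})$ over the central $\mathsf{a}$ and then applies \eqref{parentGJMS} uniformly to each monomial, producing the same fatal factor $\bigl(P\cdot(Y+V)\bigr)^k\big|_{Y=0}=(P^-)^k$ without ever invoking the weight-$(1-\frac{n}{2})$ degeneracy; it also disposes of all cross terms in the $\alpha_i$ at once, since the Stirling reordering of $(\mathsf{b}\mathsf{c})^m$ only yields $\mathsf{a}^{m-j}\mathsf{b}^j\mathsf{c}^j$ with $j\geq 1$, so no correction collapses to a pure power of $\mathsf{a}$. Your check that $\mathsf{c}^k\Phi$ retains weight $\ell-\frac{n}{2}$ (and covariant constancy, since $I$ is parallel) closes the only hypothesis \eqref{parentGJMS} needs, and the residual $(-1)^\ell$ you flag is a sign convention the paper itself absorbs into \eqref{A-exp} (its own proof glosses over the same sign). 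In short: same underlying mechanism, but your algebraic one-shot argument is a clean and arguably tighter alternative to the paper's iterative weight bookkeeping.
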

\begin{proof}
For $\alpha_1 = ... = \alpha_{\ell} = 0$ the statment os obvious as $C_{0\ldots 0}\Phi = (I\cdot\cD)^{\ell}\Phi = (-I\cdot(Y+V))^{\ell}\Box^{\ell}\Phi$ and hence coincides with $A_{s,t}\phi$
upon setting to zero $Y^A,P^\pm$.

For nonvanishing $\alpha_i$ let us show that all the terms proportional in $C_{\alpha_1 \alpha_2... \alpha_{\ell}}$ to $\alpha$ are also proportional to $P^{-}$ and hence do not contribute to $L^{-1}C_{\alpha_1 \alpha_2... \alpha_{\ell}}$. To this end observe that $I\cdot \dl P\Phi$ has the same weight as $\Phi$ and $(C_{\alpha_2... \alpha_{\ell}}\Phi(x, P, Y))$ is of weight $w = 1- \frac{n}{2}$. The following equality holds:
\begin{equation}
L^{-1} (I\cdot\mathcal{D}) C_{\alpha_2... \alpha_{\ell}}\Phi = L^{-1} C_{\alpha_1\alpha_2... \alpha_{\ell}}\Phi\,,
\end{equation}
because $\left((P\cdot\mathcal{D})(I\cdot \dl{P}) C_{\alpha_2... \alpha_{\ell}}\Phi(x, P, Y)\right)\big|_{Y=0}$ is proportional to $P^-$ thanks to the form of $\cD$ \eqref{TomasD} for $w = 1 - \half n$.

Next, observe that
\begin{equation}
	(I\cdot\mathcal{D})C_{\alpha_2... \alpha_k}\Phi =(I\cdot\mathcal{D} - \alpha_2 (P\cdot\mathcal{D})( I\cdot\dl P)) (I\cdot\mathcal{D})C_{\alpha_3... \alpha_{\ell}}\Phi
\end{equation}
because $I\cdot \cD$ commutes with every term in $C_{\alpha_1...\alpha_{\ell}}$. It follows the term containing $\alpha_2$ is proportional to $P^-$ at $Y=0$ and hence
\begin{equation}
L^{-1} (I\cdot\mathcal{D}) (I\cdot\mathcal{D}) C_{\alpha_3... \alpha_k}\Phi = L^{-1} C_{\alpha_1\alpha_2... \alpha_k}\Phi\,.
\end{equation}
Repeating the above steps ${\ell-2}$ times one finds:
\begin{equation}
\label{projectedFT}
	L^{-1}(I\cdot D)^{\ell}\Phi(x, P, Y) = L^{-1} C_{\alpha_1\alpha_2... \alpha_{\ell}}\Phi(x, P, Y)\,.
\end{equation}
\end{proof}

Let us return to the proof of Proposition \bref{proposition}. Consider $B^{\ell-1}_k\Phi(x, P, Y) \in S[s, 1 - \frac{n}{2}]$ and let $\tilde{\phi}(x, p) = L^{-1}B^{\ell-1}_k\Phi$. The analysis of Appendix \ref{sec:lift-obst} shows that $\tilde\phi=0$ is equivalent to $B^{\ell-1}_k\Phi=0$.

However, that equation $B^{\ell}_k\Phi\Big|_{Y=P^\pm=0}=0$ is equivalent to $B^{\ell}_k\Phi\Big|_{Y=0}=0$ is not obvious.
To see that it is nevertheless the case let us denote $\Psi=B^{\ell-1}_k\Phi$, $\Psi\in S[s, 1- \half n]$. The equation $B_t\Psi=0$, $t=s-2+\frac{n}{2}$ can be regarded in $n=4$ as a maximal depth CHS equations while in $n>4$ as a so-called  ``long'' CHS fields~\cite{Metsaev:2016oic}. Let us now find find terms in $B_{t}\Psi\Big|_{Y=0}$, proportional to $P^{-}$. The result is
\begin{equation}
	B_t\Phi\Big|_{Y=0} = \sum_{k=0}^{s}(\frac{k}{t} -1) (P^{-})^{k}L^{-1}(\dl {P^-})^k(\dl Y \cdot \dl Y)\Phi
\end{equation}
One can check that in addition to CHS equations the additional ones, i.e. conformal gauge conditions, appear only in dim $n = 4$ and are proportional to $(P^{-})^s$ \cite{Chekmenev:2015kzf}\cite{Bekaert:2013zya} but the coefficient vanishes in this case. In this way we conclude that $L^{-1}B_k^\ell \Phi=0$ and $(B_k^\ell \Phi)_{Y=P^\pm=0}=0$ are equivalent, giving the statement of Proposition \bref{proposition}.

\section{Partially gauge-fixed PM operators}
\label{sec:gauge_pm}

Some of the operators entering factorized FT equation are known in the literature  \cite{Skvortsov:2007kz}, \cite{Campoleoni:2012th}. Namely, the equation $\bar{B}_1\Phi = 0$, $\Phi\in S[s,s-2]$ in terms of tensor fields reads as:
\begin{multline}
	\lbrace \bar\nabla^2 +\frac{2J}{n}(-s + (n+s-3)(s-2) ) -\\-
	(p\cdot\bar\nabla)(\frac{\partial}{\partial p}\cdot\bar\nabla)+\frac{1}{n+2s-4}p^2(\frac{\partial}{\partial p}\cdot\bar\nabla)^2 \rbrace \phi(x,p)=0
\end{multline}
It is easy to check that this is precisely the Fronsdal equations in the gauge where $\phi(x,p)$ is traceless, i.e. $\dl{p}\cdot\dl{p}\phi(x,p)=0$. The residual gauge transformations are 

\begin{equation}
\delta \Phi(x, p) = p\cdot\bar\nabla \,\,\xi(x, p)\,,
\end{equation}
where $\xi$ is a subject to the following constraints: 
\begin{equation}
\dl{p}\cdot\bar\nabla\,\,\xi(x, p) = 0\,, \qquad  \dl{p}\cdot \dl{p} \,\, \xi(x, p)=0\,.
\end{equation}

In a similar fashion we may write (some of) the gauge transformations for $\bar{B}_t$:
\begin{equation}
\label{PM_n}
	\begin{gathered}
			\delta \Phi(x, p) = \lbrace (p\cdot \bar\nabla)^{t} + ...\rbrace\lambda(x, p)\,,\\
	\dl{p}\cdot\bar\nabla\,\,\lambda(x, p) = 0, \qquad 
	\dl{p}\cdot \dl{p} \,\, \lambda(x, p) =0\,.
	\end{gathered}
\end{equation}
In particular, for $s=2$, $n=4$ one has:
\begin{equation}
	\begin{gathered}
		\bar{B_1}\phi(x, p) = 0,\quad \delta\phi(x, p) = p\cdot\bar\nabla\lambda_1(x, p), \quad \dl p \cdot \bar\nabla\,\,\lambda_1  = 0\,,\\
		\bar{B_2}\phi(x, p) = 0, \quad \delta\phi(x, p) = ((p\cdot\bar\nabla)^2 - \frac{1}{4}p^2\bar\nabla^2)\lambda_2(x)\,.
	\end{gathered}
\end{equation}

{\footnotesize

\addtolength{\baselineskip}{-3pt}
\addtolength{\parskip}{-3pt}

\providecommand{\href}[2]{#2}\begingroup\raggedright\endgroup

\bibliographystyle{utphys}
}
\end{document}